\numberwithin{equation}{section}
\newtheorem{theorem}{Theorem}
\newaliascnt{proposition}{theorem}
\newtheorem{proposition}[proposition]{Proposition}
\newaliascnt{lemma}{theorem}
\newtheorem{lemma}[lemma]{Lemma}
\newaliascnt{corollary}{theorem}
\newtheorem{corollary}[corollary]{Corollary}
\newaliascnt{definition}{theorem}
\newaliascnt{example}{theorem}
\newaliascnt{remark}{theorem}
\newtheorem{remark}[remark]{Remark}
\newtheorem{assumption}{Assumption}
\newcounter{hypA}
\newcounter{hypB}
\def\equationautorefname~#1\null{%
  Equation~(#1)\null
}
\newcommand{\1}[1]{\mathds{1}_{#1}}
\newcommandx\A[2][1=]{
\ifthenelse{\equal{#1}{}}
{\hspace{-1mm}(\textbf{A\ref{#2}})\hspace{-1mm}}
{\hspace{-1mm}(\textbf{A\ref{#1}--\ref{#2}})\hspace{-1mm}}
}
\newcommand{\af}[1]{h_{#1}} 
\newcommand{\alg}[1]{\mathcal{#1}} 
\newcommand{\addf}[1]{\termletter_{#1}} 
\newcommandx{\arr}[2][1=]{
\ifthenelse{\equal{#1}{}}
{\upsilon_{\N}^{#2}}
{(\upsilon_{\N}^{#2})^{#1}}
}
\newcommandx{\arrterm}[3][1=]{
\ifthenelse{\equal{#1}{}}
{\tilde{\upsilon}_{\N}(#2,#3)}
{\tilde{\upsilon}_{\N}^{#1}(#2,#3)}
}
\newcommandx{\asvar}[4][1=]{
\ifthenelse{\equal{#1}{}}
{\sigma_{#2} \langle #3, #4 \rangle(\af{#2})}
{\sigma_{#2}^2 \langle #3, #4 \rangle(\af{#2})}
}
\newcommandx{\asvarFFBSm}[4][1=]{
\ifthenelse{\equal{#1}{}}
{\tilde{\sigma}_{#2} \langle #3, #4 \rangle(\af{#1})}
{\tilde{\sigma}_{#2}^2 \langle #3, #4 \rangle(\af{#2})}
}
\newcommandx{\asvarstd}[2][1=]{
\ifthenelse{\equal{#1}{}}
{\sigma_{#2}(\af{#2})}
{\sigma_{#2}^2(\af{#2})}
}
\newcommandx{\asvarFFBSmstd}[2][1=]{
\ifthenelse{\equal{#1}{}}
{\tilde{\sigma}_{#2}(\af{#2})}
{\tilde{\sigma}_{#2}^2(\af{#2})}
}
\newcommandx{\asvard}[3][1=]{
\ifthenelse{\equal{#1}{}}
{\sigma_{#2}(#3)}
{\sigma_{#2}^2 (#3)}
}
\newcommandx{\asvardFFBSm}[3][1=]{
\ifthenelse{\equal{#1}{}}
{\sigma_{#2} \langle #3 \rangle(\af{#2})}
{\sigma_{#2}^2 \langle #3 \rangle(\af{#2})}
}
\newcommand{\asvardPaRIS}[4]{\varsigma_{#1,#2,#3;\parvec}(#4)}
\newcommandx\B[2][1=]{
\ifthenelse{\equal{#1}{}}
{\hspace{-1mm}(\textbf{B\ref{#2}})\hspace{-1mm}}
{\hspace{-1mm}(\textbf{B\ref{#1}--\ref{#2}})\hspace{-1mm}}
}
\newcommandx{\BF}[3][1=]{
\ifthenelse{\equal{#1}{}}
{\kernel{D}_{#2, #3}}
{\kernel{D}_{#2, #3}^{#1}}
}
\newcommandx{\BFcent}[3][1=]{
\ifthenelse{\equal{#1}{}}
{\tilde{\kernel{D}}_{#2, #3}}
{\tilde{\kernel{D}}_{#2, #3}^{#1}}
}
\newcommand{\bi}[3]{J_{#1}^{(#2, #3)}}
\newcommandx{\bk}[2][1=]{ 
\ifthenelse{\equal{#1}{}}
{\overleftarrow{\kernel{Q}}_{#2}}
{\overleftarrow{\kernel{Q}}_{#2}^{#1}}
}
\newcommand{\bmf}[1]{\set{F}(#1)} 
\newcommand{\borel}[1]{\mathcal{B}(\set{#1})} 
\newcommandx{\cexp}[3][1=]{
\ifthenelse{\equal{#1}{}}
{\mathbb{E}\left[ #2 \mid #3 \right]} 
{\mathbb{E}[ #2 \mid #3 ]}
}
\newcommand{\convd}{\overset{\mathcal{D}}{\longrightarrow}}
\newcommand{\convp}{\overset{\prob}{\longrightarrow}}
\newcommand{\defeq}{=\vcentcolon}
\newcommand{\deriv}{\nabla_{\hspace{-.07cm}\parvec}}
\newcommand{\epart}[2]{\xi_{#1}^{#2}}
\newcommand{\eqdef}{\vcentcolon=} 
\newcommand{\eqsp}{}
\newcommandx\filtderiv[2][1=]{
\ifthenelse{\equal{#1}{}}
	{\eta_{#2}}
	{\eta_{#2}^\N}
}
\newcommandx{\genfd}[1][1=]{
\ifthenelse{\equal{#1}{}}
{\mathcal{F}}
{\mathcal{F}_{\N}}
}
\newcommand{\hbd}{| \tilde{h} |_{\infty}}
\newcommand{\hk}{\kernel{Q}} 
\newcommand{\hkup}{\bar{\varepsilon}}
\newcommand{\hd}[1][]{q_{#1}} 
\newcommand{\idop}{\operatorname{id}}
\newcommand{\jimmy}[1]{#1}
\newcommand{\johan}[1]{#1}
\renewcommand{\k}{j}
\newcommand{\kernel}[1]{\mathbf{#1}}
\newcommand{\kletter}{\tilde{\N}}
\newcommandx{\K}[1][1=]{
\ifthenelse{\equal{#1}{}}{{\kletter}}{{\tilde{\N}^{#1}}}
}
\newcommandx{\lebfun}[1][1=]{
\ifthenelse{\equal{#1}{}}
{\lebfunletter}
{\lebfunletter_{#1}}
}
\newcommand{\lebfunletter}{\varphi}
\newcommand{\lk}[1]{\kernel{L}_{#1}} 
\newcommand{\llh}[1]{L_{#1}}
\newcommand{\logllh}[1]{\ell_{#1}}
\newcommandx{\M}[1][1=]{
\ifthenelse{\equal{#1}{}}
{\N}
{\N}
}
\newcommand{\md}[1]{g_{#1}} 
\newcommand{\mdlow}{\ushort{\delta}}
\newcommand{\mdup}{\bar{\delta}}
\newcommand{\meas}[1]{\mathsf{M}(#1)}
\newcommand{\measSpace}[1]{(\set{#1},\alg{#1})} 
\newcommand{\mr}{\varrho}
\newcommand{\mk}{\kernel{G}}
\newcommand{\N}{N}
\newcommand{\nset}{\mathbb{N}}
\newcommand{\nsetpos}{\mathbb{N}^*}
\newcommandx{\oscn}[2][1=]{
\ifthenelse{\equal{#1}{}}{\operatorname{osc}(#2)}{\operatorname{osc}^{#1}(#2)}
}
\newcommand{\partpred}[1]{\pi_{#1}^\N}
\newcommandx\post[2][1=]{
\ifthenelse{\equal{#1}{}}
	{\phi_{#2}}
	{\phi_{#2}^\N}
}
\newcommandx\postafl[2][1=]{
\ifthenelse{\equal{#1}{}}
	{\phi_{#2}^{\tol}}
	{\phi_{#2}^{\N,\tol}}
}
\newcommand{\parvec}{\theta}
\newcommand{\parspace}{\Theta}
\newcommand{\pred}[1]{\pi_{#1}}
\newcommand{\prob}{\mathbb{P}} 
\newcommand{\probdist}{\mathsf{Pr}}
\newcommand{\probmeas}[1]{\mathsf{M}_1(#1)}
\newcommand{\refM}{\mu} 
\newcommand{\rmd}{\mathrm{d}}
\newcommand{\rmlterm}[2]{\hat{\zeta}_{#1}^{#2}}
\newcommand{\rset}{\mathbb{R}}
\newcommand{\set}[1]{\mathsf{#1}} 
\newcommand{\stat}{\Pi}
\newcommand{\supn}[1]{\|#1\|_{\infty}} 
\newcommand{\termletter}{\tilde{h}}
\newcommand{\testfsymb}{f}
\newcommandx{\testf}[1][1=]{  
\ifthenelse{\equal{#1}{}}{\testfsymb}{\testfsymb_{#1}}
}
\newcommand{\testfpsymb}{\mathfrak{F}}
\newcommandx{\testfp}[1][1=]{  
	\ifthenelse{\equal{#1}{}}{\testfpsymb}{\testfpsymb_{#1}}
}
\newcommandx{\testfc}[1][1=]{
	\ifthenelse{\equal{#1}{}}{\underline{\testfsymb}}{\underline{\testfsymb}_{#1}}
}
\newcommand{\testfhsymb}{\tilde{\mathfrak{F}}}
\newcommandx{\testfh}[1][1=]{  
	\ifthenelse{\equal{#1}{}}{\testfhsymb}{\testfhsymb_{#1}}
}
\newcommand{\tol}{\varepsilon}
\newcommand{\truepar}{{\parvec_\ast}}
\newcommand{\tstatletter}{\kernel{T}}
\newcommandx\tstat[2][1=]{
\ifthenelse{\equal{#1}{}}
	{\tstatletter_{#2}}
	{\tau_{#2}^{#1}}
}
\newcommand{\tstattil}[2]{\tilde{\tau}_{#2}^{#1}}
\newcommandx\varlim[2][1=]{
\ifthenelse{\equal{#1}{}}
	{\sigma_{#2}^{2, \infty}}
	{\sigma_{#2}^{2, \N}}
}
\newcommand{\wgt}[2]{\omega_{#1}^{#2}}
\newcommand{\wgtsum}[1]{\Omega_{#1}}
\newcommand{\Xinit}{\chi}
\begin{document}

\begin{frontmatter}


\title{Particle-based, online estimation of tangent filters with application to parameter estimation in nonlinear state-space models}
\runtitle{Online estimation of tangent filters}

\begin{aug}
\author{\fnms{Jimmy} \snm{Olsson}
\ead[label=e1]{jimmyol@kth.se}}
\and
\author{\fnms{Johan} \snm{Westerborn Alenl\"ov}
\ead[label=e2]{johawes@kth.se}}

\runauthor{J.~Olsson and J.~Westerborn~Alenl\"ov}

\affiliation{KTH Royal Institute of Technology}

\address{Department of Mathematics \\
KTH Royal Institute of Technology \\
SE-100 44  Stockholm, Sweden \\
\printead{e1}, \printead*{e2}}
\end{aug}


\begin{abstract}
This paper presents a novel algorithm for efficient online estimation of the filter derivatives in general hidden Markov models. The algorithm, which has a linear computational complexity and very limited memory requirements, is furnished with a number of convergence results, including a central limit theorem with an asymptotic variance that can be shown to be uniformly bounded in time. Using the proposed filter derivative estimator we design a recursive maximum likelihood algorithm updating the parameters according the gradient of the one-step predictor log-likelihood. The efficiency of this online parameter estimation scheme is illustrated in a simulation study. 
\end{abstract}

\end{frontmatter}


\section{Introduction}
\label{sec:introduction}

The general state-space \emph{hidden Markov models} form a powerful statistical modeling tool which is presently applied across a wide range of scientific and engineering disciplines; see e.g. \cite{cappe:2001:hmmbib,cappe:moulines:ryden:2005} and the references therein for such examples. 
In the literature, the term ``hidden Markov model'' is often restricted to the case where the state space of the hidden chain is a finite set, and we hence use  the term ``state-space model'' (SSM) to stress that the state spaces of the models that we consider are completely general. More specifically, SSMs are bivariate stochastic processes consisting of an observable process $\{ Y_t \}_{t \in \nset}$ and an unobservable Markov chain $\{ X_t \}_{t \in \nset}$, referred to as the \emph{state process} and \emph{observation process} and taking on values in some general state spaces $\set{Y}$ and $\set{X}$, respectively. When using these models in practice, the statistician is typically interested in calculating conditional distributions of the unobservable states given some fixed observation record or data-stream $\{ y_t \}_{t \in \nset}$.  Generally, these distributions cannot be expressed in a closed form, and the user needs to rely on approximations. Moreover, any SSM is typically parameterised by unknown model parameters $\parvec$ which need to be calibrated ex ante.  

The frequentist approach to parameter estimation consists in maximising the \emph{likelihood function} $\parvec \mapsto \llh{\parvec}(y_{0:t})$, where $ \llh{\parvec}(y_{0:t})$ denotes the joint density of the observations evaluated at the given data $y_{0:t} = (y_0, \ldots, y_t)$ (this will be our generic notation for vectors), or, equivalently, the \emph{log-likelihood} function $\parvec \mapsto \logllh{\parvec}(y_{0:t}) \eqdef \log \llh{\parvec}(y_{0:t})$. Even though the likelihood cannot generally be expressed in a closed form, there are several approaches to approximate maximisation of the same. These methods are typically based on either the \emph{expectation-maximisation} algorithm~\citep{dempster:laird:rubin:1977}, where the maximisation is performed over an intermediate quantity, or the tools of \emph{gradient-based optimisation}. The latter algorithms produce, in a \emph{steepest ascent} manner, a sequence $\{Ê\parvec_n \}_{n \in \nset}$ of parameter estimates converging to the maximum likelihood estimator; i.e., at each iteration, a parameter estimate $\parvec_n$ is updated recursively through a move in the direction given by an estimate of the \emph{score function} $\deriv \logllh{\parvec_n}(y_{0:t})$. 

There are mainly two approaches to estimation of the score function. The first one goes via \emph{Fisher's identity} \citep[see, e.g.,][Proposition~10.1.6]{cappe:moulines:ryden:2005}, which relates the score function of the observed data to that of the complete data and thus transfers the problem to that of \emph{joint smoothing}, i.e., the computation of the joint posterior distribution of $X_{0:t}$ given $Y_{0:t} = y_{0:t}$. The second approach, which is the focus of the present paper, goes via the decomposition 
\begin{equation} \label{eq:log-likelihood:sum}
	\logllh{\parvec}(y_{0:t}) = \sum_{s = 0}^t \logllh{\parvec}(y_s \mid y_{0:s - 1}),
\end{equation}
where each \emph{one-step predictor likelihood} $\logllh{\parvec}(y_s \mid y_{0:s - 1})$ (with $\logllh{\parvec}(y_{0} \mid y_{0:-1}) \eqdef \logllh{\parvec}(y_0)$) can be computed via the \emph{sensitivity equations}; see~\citet[Section~10.2.4]{cappe:moulines:ryden:2005} and~\autoref{sec:par:est} for details. Let $\pred{s;\parvec}(x_s)$ be the density of the \emph{predictor} at time $s$, i.e., the distribution of the state $X_s$ conditional to the observation record $Y_{0:s - 1} = y_{0:s - 1}$; it is then, by swapping the order of the nabla and integration operations, possible to express the gradient $\deriv \logllh{\parvec}(y_s \mid y_{0:s - 1})$ as a functional of the gradient  $\deriv \pred{s;\parvec}(x_s)$, the latter being typically referred to as the \emph{filter derivative} or \emph{tangent filter} or \emph{filter sensitivity}. It should be noted that despite its name, the tangent filter is the gradient of the \emph{predictor} rather than the \emph{filter}, where the filter at time $s$ is defined as the distribution of $X_s$ conditionally to $Y_{0:s} = y_{0:s}$, i.e., the updated predictor. 

Appealingly, the recursive updating scheme for the tangent filters given by the sensitivity equations requires only access to the filter distributions (and not the full joint smoothing distribution, as in the case of Fisher identity-based score approximation). If estimation is to be carried through in the \emph{offline} mode, by processing repeatedly a fixed batch of observations until convergence, the approach based on Fisher's identity is typically computationally more efficient and hence to be preferred. On the other hand, if estimation is to be carried through in an \emph{online} manner, i.e., through a single sweep of a (potentially infinitely) long sequence of observations, then the sensitivity equation approach is better suited, as the filter distributions can be updated recursively at constant computational cost. Online parameter learning is relevant when dealing with very big batches of data or in scenarios requiring the parameters to be continuously updated on the basis of a continuous stream of arriving observations.

If the state space of the latent Markov chain is a finite set or if the model belongs to the linear Gaussian SSMs, exact computation of the filter distributions---and thus the filter derivatives and the gradient of the one-step predictor log-likelihood---is possible \citep{legland:mevel:1997,cappe:moulines:ryden:2005}. However, as touched upon above, the general case calls for approximation of these quantities, and existing approaches to nonlinear filtering can typically be divided into two classes. The first class contains methods relying on linearisation, such as the \emph{extended Kalman filter}, the \emph{unscented Kalman filter}~\citep{anderson:moore:1979,julier:uhlmann:1997}, etc., whose success is generally limited in the presence of significant nonlinear or non-Gaussian model components. The second class is formed by the \emph{sequential Monte Carlo} (SMC) or \emph{particle filtering} approaches \citep{gordon:salmond:smith:1993}, where a sample of particles with associated importance weights is propagated sequentially and randomly in time according to the model dynamics in order to approximate the flows of filter and predictor distributions. Due to their high capability of solving potentially very difficult nonlinear/non-Gaussian filtering problems, SMC methods have, as indicated by the over 8000 Google Scholar citations of the first monograph \cite{doucet:defreitas:gordon:2001} on the topic, seen a rapid and ever-increasing interest during the last decades. 

\subsection{Previous work}

For an expos\'e of current particle-based approaches to parameter estimation in nonlinear SSMs, we refer to \cite{kantas:doucet:singh:maciejowski:chopin:2015}. SMC-based score function estimation via tangent filters was discussed by~\cite{doucet:tadic:2003,poyiadjis:doucet:singh:2005,poyiadjis:doucet:singh:2011,delmoral:doucet:singh:2015}. 

The work \cite{delmoral:doucet:singh:2015}, which can be viewed as the state of the art when it concerns SMC-based tangent filter estimation, expresses each tangent filter as a centered, smoothed expectation of an additive state functional. Using the so-called \emph{backward decomposition} of the joint smoothing distribution and a recursive form of this decomposition going back to \cite{cappe:2009}, {\cite{delmoral:doucet:singh:2010} are able to form, on-the-fly and through the perspicacious propagation of a number of smoothed statistics, one statistic per particle, non-collapsed particle approximations of such additive smoothed expectations. Moreover, \cite{delmoral:doucet:singh:2010}} furnish their estimator with a central limit theorem (CLT) describing the weak convergence, as the size of the particle sample tends to infinity, of the particle tangent filter approximations to the exact counterparts. Appealingly, the fact that the smoothed expectations are centered allows, under strong mixing assumptions, the asymptotic variance of the mentioned CLT to be bounded uniformly in time, which establishes the long-term stochastic stability of the algorithm for large sample sizes. {However, since the propagation of the smoothed statistic associated with a given particle requires an expectation under the ancestral posterior of that particle to be computed by summation, the computational complexity of the algorithm is \emph{quadratic} in the number of particles. The} user is hence forced to keep the number of particles at a modest value, which may imply {severe} numerical instability in practice (see \autoref{sec:simulations} for an illustration). 

\subsection{Our contribution}

On the basis of the approach taken by \cite{delmoral:doucet:singh:2015}, we develop an efficient and numerically stable algorithm for particle approximation of the filter derivatives in SSMs. The approach that we propose includes, as a sub-routine, the \emph{particle-based, rapid incremental smoother} (PaRIS) \citep[introduced recently by us in][]{olsson:westerborn:2014b}, an algorithm that is tailor-made for online approximation of smoothed additive state functionals.

{By replacing, in the forward recursion proposed by \cite{delmoral:doucet:singh:2015}, the computation of each smoothed statistic by a conditionally unbiased Monte Carlo estimate, the computational complexity \johan{becomes}, under mild assumptions, \johan{\emph{linear}} in the number of particles. Since the original estimator can be viewed as a Rao-Blackwellisation of the new one, we will sometimes refer to this measure as ``\emph{de}-Rao-Blackwellisation''.} This allows the user to mount a considerably larger amount of particles for a given computational budget, which compensates by far for the slight increase of asymptotic variance added by the additional sampling step. {Thus, the technique that we propose has some similarities with that used in particle-based \emph{two-filter smoothing}, where a linear computational complexity is obtained by replacing certain marginalisations by random sampling \citep[see][]{fearnhead:wyncoll:tawn:2010} (however, for the sake of completeness we remark that two-filter particle smoothers, which require a so-called \emph{backward information filter} to be run backwards in time, cannot be used in online settings).} By reusing techniques developed by \cite{olsson:westerborn:2014b} and \citet{delmoral:doucet:singh:2015} we are able to furnish the proposed algorithm with an exponential concentration inequality and a CLT, whose asymptotic variance is given by that of the Rao-Blackwellised counterpart plus one additional term being inversely proportional to sample size of the supplementary, PaRISian sampling operation. {This result is analogous to the results obtained by \cite{nguyen:lecorff:moulines:2017} for the case of particle-based two-filter smoothing.} Moreover, by assuming strong mixing of the model, the asymptotic variance can be bounded uniformly in time. 

In the second part of the paper, we cast our PaRIS-based tangent filter estimator into the framework of recursive maximum likelihood estimation. {The numerical performance of this online parameter estimation algorithm is investigated in a simulation study, where we are able to report significant improvement over the approach of \cite{delmoral:doucet:singh:2015} for online parameter learning in the stochastic volatility model of \cite{hull:white:1987}. 
In addition, we propose our algorithm as a prototypical solution to the \emph{simultaneous localisation and mapping} (SLAM) problem, with promising results.
} 

Finally, we remark that the parameter estimation algorithm was outlined by us in the conference note~\cite{olsson:westerborn:2016}. The present paper provides rigorous theoretical results underpinning the initial observations made by \cite{olsson:westerborn:2016} and complement the simulation study of that work with additional examples. 

\subsection{Outline}

After having introduced, prefatorily, general SSMs, tangent filters, and SMC-based smoothing (including the PaRIS) in~\autoref{sec:preliminaries}, \autoref{sec:main:results} presents the proposed approach to tangent filter estimation. In addition, \autoref{sec:algo} includes theoretical convergence results in the form of an exponential concentration inequality and a CLT with a time-uniform bound on the asymptotic variance. \autoref{sec:par:est} applies the PaRISian tangent filter estimator to recursive maximum likelihood estimation, and the resulting online parameter estimation algorithm is tested numerically in \autoref{sec:simulations}. Finally, some conclusions are drawn in~\autoref{sec:discussion} and \autoref{sec:proofs} contains all proofs.


\section{Preliminaries}
\label{sec:preliminaries}

\subsection{General state-space models}
\label{sec:notation}

In the following we let $\nset$ denote the natural numbers and set $\nsetpos \eqdef \nset \setminus \{0\}$. For any measurable space $(\mathsf{E}, \mathcal{E})$, where $\mathcal{E}$ is a countably generated $\sigma$-algebra, we denote by $\bmf{\mathcal{E}}$ the set of bounded $\mathcal{E}/\borel{\mathbb{R}}$-measurable functions on $\mathsf{E}$, where $\borel{\mathbb{R}}$ denotes the Borel $\sigma$-algebra. For any function $h \in \bmf{\mathcal{E}}$, we let $\supn{h} \eqdef \sup_{x \in \mathsf{E}}|h(x)|$ denote the supremum norm of $h$. Let $\meas{\mathcal{E}}$ denote the set of $\sigma$-finite measures on $\mathcal{E}$ and $\probmeas{\mathcal{E}} \subset \meas{\mathcal{E}}$ the set of probability measures on the same space. 

Now, let $(\set{X}, \alg{X})$ and $(\set{Y}, \alg{Y})$ be measurable spaces and $\hk_{\parvec} : \set{X} \times \alg{X} \to [0,1]$ and $\mk_{\parvec} : \set{X} \times \alg{Y} \to [0,1]$ Markov transition kernels. We will often subject kernels to operations such as multiplication and tensor multiplication. {Moreover, a transition kernel induces two different operators, one on bounded measurable functions and one on measures; we refer to \autoref{sec:kernel} for details.} Moreover, let $\Xinit \in \probmeas{\alg{X}}$. The kernels above are parameterised by a model parameter vector $\parvec$ belonging to some space {$\parspace \subseteq \rset^d$, \johan{$d \in \nsetpos$},} and we will use subindices to stress the dependence of any distributions on these parameters. We define an SSM as the canonical Markov chain $\{(X_t, Y_t)\}_{t \in \nset}$ with initial distribution $\Xinit \varotimes \mk_{\parvec}$ \johan{(see~\autoref{sec:kernel} for a definition of the product $\varotimes$ of a measure and a kernel)} and Markov transition kernel 
\begin{equation} \label{eq:kernel:SSM}
    (\set{X} \times \set{Y}) \times (\alg{X} \varotimes \alg{Y}) \ni ((x, y), A) \mapsto \int \1{A}(x', y') \, \hk_\parvec(x, \rmd x') \, \mk_\parvec(x', \rmd y'). 
\end{equation}
In this setting, the state process $\{X_t\}_{t \in \nset}$ is assumed to be only partially observed through the observation process $\{Y_t\}_{t \in \nset}$. Under the dynamics \eqref{eq:kernel:SSM}, 
\begin{itemize}
	\item[(i)] the state process $\{X_t\}_{t \in \nset}$ is itself a Markov chain with transition kernel $\hk_{\parvec}$ and initial distribution $\Xinit$.  
	\item[(ii)] the observations are, conditionally on the states, independent and such that the conditional distribution of each $Y_t$ depends on the corresponding $X_t$ only and is given by the \emph{emission distribution} $\mk_{\parvec}(X_t, \cdot)$.
\end{itemize}
For simplicity we have assumed that the initial distribution $\Xinit$ does not depend on $\parvec$. Throughout this paper we will consider the {fully dominated case where for all $\parvec \in \Theta$, $\hk_{\parvec}$ and $\mk_{\parvec}$ admit densities $\hd[\parvec]$ and $\md{\parvec}$ with respect to some reference measures $\refM \in \meas{\alg{X}}$ and $\nu \in \meas{\alg{Y}}$, respectively. This means that for all $\theta \in \Theta$, $x \in \set{X}$, $f \in \bmf{\alg{X}}$, and $h \in \bmf{\alg{Y}}$,
\begin{align}
	\hk_{\parvec}f(x) &= \int f(x') \hd[\parvec](x, x') \, \refM(\rmd x'), \\
	\mk_{\parvec}h(x) &= \int h(y) \md{\parvec}(x, y) \, \nu(\rmd y). 	
\end{align}
Likewise, $\Xinit$ is assumed to have a density, which we denote by the same symbol, with respect to the same reference measure $\refM$.} 

In the following we assume that we are given a {fixed} sequence $\{y_t\}_{t \in \nset}$ of observations of $\{Y_t\}_{t \in \nset}$, and in order to avoid notational overload we will generally suppress any dependence on these observations from the notation by denoting, e.g., by $\md{t ; \parvec}$ the function $\set{X} \ni x \mapsto \md{\parvec}(x, y_t)$. For any triple $(s,s',t) \in \nset^3$ such that $s \leq s'$ we denote by $\post{s:s' \mid t; \parvec}$ the conditional distribution of $X_{s:s'}$ conditioned on $Y_{0:t}$. For any $f \in \bmf{\alg{X}^{\varotimes (s' - s + 1)}}$, this distribution can be expressed as 
{
$$
	\post{s:s' \mid t; \parvec}f = \llh{\parvec}(y_{0:t})^{-1} \idotsint f(x_{s:s'}) \, \left( \prod_{m = 0}^t \md{m ; \parvec}(x_m) \right) \left( \Xinit(\rmd x_0) \prod_{\ell = 0}^{(t \vee s') - 1} \hk_{\parvec}(x_{\ell}, \rmd x_{\ell + 1}) \right),
$$
}
where 
\begin{align}
	\llh{\parvec}(y_{0:t}) = \int \cdots \int \md{0;\parvec}(x_0) \, \Xinit(\rmd x_0) \prod_{\ell = 0}^{t - 1} \md{\ell + 1;\parvec}(x_{\ell + 1}) \, \hk_{\parvec}(x_{\ell}, \rmd x_{\ell + 1}) \label{eq:llh}
\end{align}
is the observed data likelihood. In the cases $s = s' = t$ and $s = s' = t + 1$ we let $\post{t; \parvec}$ and $\pred{t + 1 ; \parvec}$ be shorthand notation for the filter and predictor distributions $\post{t:t \mid t; \parvec}$ and $\post{t + 1:t + 1 \mid t; \parvec}$, respectively. In the case $s = 0$ and $s' = t$, the distribution $\post{0:t \mid t; \parvec}$ is referred to as the joint-smoothing distribution. The filter and predictor distributions are closely related; indeed, the well-known \emph{filtering recursion} provides that for all $t \in \nset$ and $f \in \bmf{\alg{X}}$, 
\begin{align}
	\post{t; \parvec} f &= \frac{\pred{t; \parvec}(\md{t;\parvec} f)}{\pred{t; \parvec} \md{t;\parvec}}, \label{eq:pred:to:filt}\\
	\pred{t + 1; \parvec} f &= \post{t;\parvec} \hk_{\parvec} f, \label{eq:filt:to:pred}
\end{align}
where, by convention, $\pred{0; \parvec} \eqdef \Xinit$. 

We will often deal with sums and products on functions with possibly different arguments. Since these functions will be defined on products of $\set{X}$, we will, when needed, let, with slight abuse of notation, subscripts define the domains of such sums and products. For instance, $\testf[t] \tilde{f}_t : \set{X} \ni x_t \mapsto \testf[t](x_t) \tilde{f}_t(x_t)$, while $\testf[t] + \tilde{f}_{t + 1}: \set{X}^2 \ni x_{t:t+1} \mapsto \testf[t](x_t) + \tilde{f}_{t+1}(x_{t+1})$.

It may be shown \citep[see, e.g.,][Proposition~3.3.6]{cappe:moulines:ryden:2005} that the state process has still the Markov property when evolving conditionally on $Y_{0:t} = y_{0:t}$ in the time-reversed direction. Moreover, the distribution of $X_s$ given $X_{s+1}$ and $Y_{0:t} = y_{0:t}$ is, for $s \leq t$, given by the \emph{backward kernel} denoted by $\bk{\post{s;\parvec}}$. {Under the assumption of a fully dominated model} the backward kernel can, for $x_{s+1} \in \set{X}$ and $\testf \in \bmf{\alg{X}}$, be expressed as
\begin{align}
	\bk{\post{s; \parvec}} f(x_{s+1}) \eqdef \frac{\int f(x_s) \hd[\parvec](x_s, x_{s+1}) \, \post{s; \parvec}(\rmd x_s)}{\int \hd[\parvec](x'_s, x_{s+1}) \, \post{s; \parvec}(\rmd x'_s)}. \label{eq:bk:kernel}
\end{align}
Consequently, using the backward kernel, we may express the joint-smoothing distribution $\post{0:t \mid t; \parvec}$ as
\begin{equation}
	\post{0:t \mid t; \parvec} = \post{t; \parvec} \tstat{t;\parvec}, \label{eq:bk:decomp}
\end{equation}
where we have defined the kernels
\begin{align} 
	\tstat{t ; \parvec} \eqdef 
	\begin{cases} 
	\bk{\post{t-1 ; \parvec}} \varotimes \bk{\post{t-2; \parvec}} \varotimes \cdots \varotimes \bk{\post{0;\parvec}} & \text{for } t \in \nsetpos, \\
	\idop{} & \text{for } t = 0,
	\end{cases}
\end{align}
{where $\varotimes$ denotes kernel tensor product; see \autoref{sec:kernel} for details. Note that for all $x \in \set{X}$, the distribution $\tstat{t ; \parvec}(x, \cdot)$ on the product space $\alg{X}^{\varotimes t}$ describes the law of the inhomogeneous Markov chain initialised at $x \in \set{X}$ and evolving backwards in time according to the backward kernels.}

As we will see in the following, the modeler is often required to compute smoothed expectations of \emph{additive} objective functions of type 
$$
\af{t}(x_{0:t}) \eqdef \sum_{s = 0}^{t - 1} \addf{s}(x_{s:s + 1}), 
$$
{where for all $0 \leq s \leq t - 1$, $\addf{s}$ is a measurable function on $\set{X}^2$.} This setting allows for recursive computation of $\{ \tstat{t; \parvec} \af{t} \}_{t \in \nset}$ according to 
\begin{equation}
	\tstat{t + 1; \parvec}\af{t+1}(x_{t+1}) = \int \{\tstat{t; \parvec}\af{t}(x_t) + \addf{t}(x_{t:t+1})\} \, \bk{\post{t; \parvec}}(x_{t+1}, \rmd x_t); \label{eq:tstat:update}
\end{equation}
see \cite{cappe:2009,delmoral:doucet:singh:2010}. The recursion \eqref{eq:tstat:update} will be a key ingredient also in the coming developments.

\subsection{Tangent filters} 
\label{sec:filter_derivatives}

The gradient of the prediction filter, known as the filter derivative or the tangent filter, is a signed measure which we, following~\cite[Section 2]{delmoral:doucet:singh:2015}, define as follows. We start with considering the gradient of $\post{0:t \mid t - 1; \parvec}$ and then extract the marginal of this quantity. First, 
{note that} for all $f_{0:t} \in \bmf{\alg{X}^{\varotimes (t + 1)}}$, 
\begin{equation} \label{eq:generalised:Fischer}
\frac{1}{\llh{\parvec}(y_{0:t - 1})} \deriv \int f_{0:t}(x_{0:t}) \Xinit(\rmd x_0) \prod_{s = 0}^{t - 1} \md{s;\parvec}(x_{s}) \hd[\parvec](x_{s}, x_{s + 1}) \, \refM^{\varotimes t}(\rmd x_{1:t}) = \post{0:t \mid t - 1; \parvec} (\af{t;\parvec} f_{0:t}), 
\end{equation}
where $\af{t;\parvec}(x_{0:t}) = \sum_{s = 0}^{t - 1} \addf{s;\parvec}(x_{s:s+1})$, with 
\begin{equation} \label{eq:complete:score}
	\addf{s;\parvec}(x_{s:s+1}) = \deriv \log \md{s;\parvec}(x_s) + \deriv \log \hd[\parvec](x_s, x_{s + 1}) {,}
\end{equation}
{is \emph{complete data score function}.} 
{The identity \eqref{eq:generalised:Fischer} is established by, first, swapping, on the left hand side, the order of differentiation and integration and, second, dividing and multiplying the integrand by the complete data likelihood. In the case $f_{0:t} = \1{{\set{X}}^{t + 1}}$, \eqref{eq:generalised:Fischer} coincides with \emph{Fisher's identity}~\citep[see, e.g., ][Proposition 10.1.6]{cappe:moulines:ryden:2005}.}   
Now, using \eqref{eq:generalised:Fischer}, we obtain straightforwardly 
\begin{align}
	\deriv \post{0:t \mid t - 1; \parvec} f_{0:t} &= \deriv \frac{1}{\llh{\parvec}(y_{0:t - 1})} \int f_{0:t}(x_{0:t}) \, \Xinit(\rmd x_0) \prod_{s = 0}^{t - 1} \md{s;\parvec}(x_{s}) \hd[\parvec](x_{s}, x_{s + 1}) \, \refM^{\varotimes t}(\rmd x_{1:t}) \\
	&= \frac{1}{\llh{\parvec}(y_{0:t - 1})} \deriv \int f_{0:t}(x_{0:t}) \, \Xinit(\rmd x_0) \prod_{s = 0}^{t - 1} \md{s;\parvec}(x_{s}) \hd[\parvec](x_{s}, x_{s + 1}) \, \refM^{\varotimes t}(\rmd x_{1:t}) \\
	&- \post{0:t \mid t - 1; \parvec} f_{0:t} \frac{1}{\llh{\parvec}(y_{0:t - 1})}  \deriv \int  \Xinit(\rmd x_0) \prod_{s = 0}^{t - 1} \md{s;\parvec}(x_s) \hd[\parvec](x_s, x_{s + 1}) \, \refM^{\varotimes t}(\rmd x_{1:t}) \\
	&=  \post{0:t \mid t - 1; \parvec} (\af{t;\parvec} f_{0:t}) - \post{0:t \mid t - 1; \parvec} f_{0:t} \times \post{0:t \mid t - 1 ; \parvec} \af{t;\parvec}. \label{eq:tangent:identity:first:step}
\end{align}
In other words, $\deriv \post{0:t \mid t - 1; \parvec} f_{0:t}$ coincides with the covariance of $f_{0:t}$ and the complete data score function $\af{t;\parvec}$ under $\post{0:t \mid t - 1; \parvec}$. {Thus, in order to obtain, from \eqref{eq:tangent:identity:first:step}, an expression of the gradient of the predictor distribution $\pred{t; \parvec}$, which is the marginal of $\post{0:t \mid t - 1; \parvec}$ with respect to the last state, we consider $f_t \in \bmf{\alg{X}}$, let $f_{0:t}(x_{0:t}) = f_t(x_t)$, and use the backward decomposition \eqref{eq:bk:decomp} to obtain}
\begin{equation} \label{eq:tangent:identity}
    \deriv \pred{t; \parvec} f_t = \pred{t; \parvec}\{ (\tstat{t; \parvec} \af{t;\parvec} - \pred{t; \parvec} \tstat{t; \parvec} \af{t;\parvec}) f_t \}.  
\end{equation}
The tangent filter $\filtderiv{t; \parvec}$ is now defined as the signed measure 
$$
    \filtderiv{t; \parvec} f_t \eqdef \deriv \pred{t; \parvec} f_t, \quad f_t \in \bmf{\alg{X}}. 
$$
Since the predictors as well as the backward statistics $\{ \tstat{t;\parvec} \af{t} \}_{t \in \nset}$ may be updated recursively through the filtering recursion \eqref{eq:pred:to:filt}--\eqref{eq:filt:to:pred} and the recursion \eqref{eq:tstat:update}, respectively, the identity \eqref{eq:tangent:identity} allows for recursive updating of the filter derivatives as well. Still, since neither the predictors nor the backward statistics are generally available in a closed form, we need to resort to approximations. This will be the topic of the next section.


\subsection{Particle-based smoothing in SSMs}  
\label{sec:partMethod}

\jimmy{Clearly, the approach to filter derivative estimation outlined above requires access to the filter and predictor distributions as well as the backward statistics. However, these quantities are available in a closed form only in very exceptional cases, e.g. when the state-space model is linear Gaussian or the state space $\set{X}$ is a finite set. In the general case we have to rely on approximations, and in this paper we employ SMC methods for this purpose.} As mentioned in the introduction, SMC methods are genetic-type algorithms propagating recursively a random sample of particles with associated weights through repeated importance sampling. 

In the following we assume that all random variables are well defined on a common probability space $(\Omega, \mathcal{F}, \prob)$.  

\subsubsection{The bootstrap particle filter} 
\label{sub:the_bootstrap_particle_filter}
The bootstrap particle filter~\citep{gordon:salmond:smith:1993} updates sequentially in time a set of particles with associated importance weights in order to approximate the filter and prediction distribution flows $\{\post{t; \parvec}\}_{t \in \nset}$ and $\{\pred{t; \parvec} \}_{t \in \nset}$, respectively, given the sequence $\{y_t\}_{t \in \nset}$ of observations. These methods are best illustrated in a recursive manner. Thus, assume that we have at hand a particle sample $\{\epart{t}{i}\}_{i=1}^{\N}$ approximating the predictor $\pred{t;\parvec}$ in the sense that for all $f \in \bmf{\alg{X}}$, as $\N$ tends to infinity, 
$$
	\partpred{t; \parvec} f \eqdef \frac{1}{\N} \sum_{i=1}^{\N} f(\epart{t}{i}) \backsimeq \pred{t; \parvec} f.
$$
Using the \emph{updating step}~\eqref{eq:pred:to:filt} of the filtering recursion we can, as soon as the observation $y_t$ becomes available, transform the uniformly weighted particle sample $\{\epart{t}{i}\}_{i=1}^{\N}$ into a weighted particle sample approximating $\post{t;\parvec}$ by associating each particle in the sample with the importance weight $\wgt{t}{i} \eqdef \md{t;\parvec}(\epart{t}{i})$. Now, the weighted particle sample $\{(\wgt{t}{i}, \epart{t}{i})\}_{i=1}^{\N}$ targets $\post{t;\parvec}$ in the sense that for all $f \in \bmf{\alg{X}}$, as $\N$ tends to infinity,
\begin{align}
	\post[part]{t; \parvec} f \eqdef \sum_{i=1}^{\N} \frac{\wgt{t}{i}}{\wgtsum{t}} f(\epart{t}{i}) \backsimeq \post{t; \parvec}f,
\end{align}
where $\wgtsum{t} \eqdef \sum_{i=1}^{\N} \wgt{t}{i}$ denotes the weight sum. In order to form a uniformly weighted particle sample $\{ \epart{t+1}{i}\}_{i=1}^{\N}$ targeting the next predictor $\pred{t+1; \parvec}$ we plug in the filter approximation $\post[part]{t; \parvec}$ into the \emph{prediction step} \eqref{eq:filt:to:pred}, which results in an approximation of $\pred{t+1;\parvec}$ given by a mixture proportional to $\sum_{i=1}^{\N} \wgt{t}{i} \hk_{\parvec}(\epart{t}{i},\cdot)$. The particle cloud is then updated through sampling from this mixture, an operation that is typically carried through in two steps, \emph{selection} and \emph{mutation}, which are analogous to updating and prediction. In the selection step, a set $\{I_{t+1}^i\}_{i=1}^{\N}$ of indices are drawn from $\probdist(\{\wgt{t}{i}\}_{i=1}^{\N})$, where for any positive numbers $\{a_i\}_{i = 1}^k$, $\probdist(\{a_i\}_{i = 1}^k)$ denotes the categorical distribution on $\{1, \ldots, k \}$ induced by the probabilities proportional to $\{a_i \}_{i = 1}^k$. After this, the mutation step propagates the particles forwards according to the dynamics of the state process, i.e., for all $i \in \{1,\ldots,\N\}$, 
\begin{align}
	\epart{t+1}{i} \sim \hk_{\parvec}(\epart{t}{I_{t+1}^i}, \cdot).
\end{align}
The algorithm is initialised by drawing $\{Ê\epart{0}{i} \}_{i = 1}^\N \sim \Xinit^{\varotimes \N}$. 

The bootstrap particle filter is well suited for approximating the filter and prediction distribution flows, but it can also be used for estimating the flow of joint-smoothing distributions. In the \emph{Poor man's smoother}, this is done by tracing, backwards in time, the genealogical history of the particles, and using the empirical measures associated with these ancestral lineages as approximations of the joint-smoothing distributions. Unfortunately, the repeated resampling operations of bootstrap particle filter collapses these trajectories in the long run, implying, for large $t$, the existence of a random time point $T < t$ before which all the trajectories coincide, i.e., all the particles $\{ \epart{t}{i} \}_{i = 1}^\N$ have a common time $T$ ancestor. Thus, this naive approach leads to a severely depleted estimator; {see~\cite{kitagawa:1996,kitagawa:sato:2001} for discussions of this particle path degeneracy phenomenon and \cite{jacob:murray:rubenthaler:2013,olsson:cappe:douc:moulines:2006,olsson:westerborn:2014b} for theoretical analyses of the same.} 

\subsection{Online smoothing of additive functionals} 
\label{sub:smoothing_of_additive_functionals}

A way of detouring the particle path degeneracy phenomenon goes via the backward decomposition~\eqref{eq:bk:decomp}. Moreover, as we will see next, \eqref{eq:tstat:update} provides, when the objective function of interest is of additive form, a means for recursive, non-collapsed particle smoothing. This requires each backward kernel $\bk{\post{s;\parvec}}$, $s \in \nset$, to be approximated, which is naturally done through 
\begin{equation} \label{eq:part:backward}
	\bk{\post[part]{s;\parvec}}f(x) = \sum_{i=1}^{\N} \frac{\wgt{s}{i} \hd[\parvec](\epart{s}{i},x)}{\sum_{\ell =1}^{\N} \wgt{s}{\ell} \hd[\parvec](\epart{t}{\ell},x)}f(\epart{s}{i}), \quad (x, f) \in \set{X} \times \bmf{\alg{X}}.  \label{eq:bk:part}
\end{equation}
The forward-only smoothing algorithm proposed by \cite{delmoral:doucet:singh:2010} consists in replacing, in \eqref{eq:tstat:update}, $\bk{\post{t;\parvec}}$ by the approximation $\bk{\post[part]{s;\parvec}}$. Proceeding again recursively and assuming that we have at hand estimates $\{\tstattil{i}{t}\}_{i=1}^{\N}$ of $\{ \tstat{t}\af{t}(\epart{t}{i})\}_{i=1}^{\N}$, we obtain the updating formula
\begin{equation}
	\tstattil{i}{t+1} = \sum_{j = 1}^{\N} \frac{\wgt{t}{j} \hd[\parvec](\epart{t}{j}, \epart{t+1}{i})}{\sum_{\ell = 1}^{\N} \wgt{t}{\ell} \hd[\parvec](\epart{t}{\ell}, \epart{t+1}{i}) }\left( \tstattil{j}{t} + \addf{t}(\epart{t}{j}, \epart{t+1}{i}) \right). \label{eq:update:FFBSm}
\end{equation}
The recursion is initialised by setting $\tstattil{i}{0} = 0$ for all $i \in \{1,\ldots,\N\}$. On the basis of these estimates, each expectation $\post{0:t \mid t - 1; \parvec}\af{t}$ may then be approximated by
$$
	\post[part]{0:t \mid t - 1;\parvec} \af{t} \eqdef \frac{1}{\N} \sum_{i=1}^{\N} \tstattil{i}{t}.
$$
This method allows for online estimation of $\post{0:t+1\mid t;\parvec}\af{t+1}$. It also has the appealing property that only the current statistics $\{\tstattil{i}{t}\}_{i=1}^{\N}$ and particle cloud $\{(\epart{t}{i}, \wgt{t}{i})\}_{i=1}^{\N}$ need to be stored. However, since the updating formula \eqref{eq:update:FFBSm} requires, at each time step, a sum of $\N$ terms to be computed for each particle, the resulting algorithm has a computational complexity that grows \emph{quadratically} with the number $\N$ of particles. \jimmy{Needless to say, this forces the user to keep the particle sample size relatively small, resulting in low precision.}

\subsubsection{The PaRIS}

In order to increase the accuracy for a given computational budget, \cite{olsson:westerborn:2014b} replace \eqref{eq:update:FFBSm} by the Monte Carlo estimate
\begin{equation}
	\tstat[i]{t+1} = \frac{1}{\K} \sum_{j=1}^{\K} \left( \tstat[\bi{t+1}{i}{j}]{t} + \addf{t}(\epart{t}{\bi{t+1}{i}{j}}, \epart{t+1}{i}) \right), \label{eq:update:paris}
\end{equation}
where the sample size $\K \in \nsetpos$ (the so-called \emph{precision parameter}) is typically very small compared to $\N$ and $\{ \bi{t+1}{i}{j} \}_{j=1}^{\K}$ are i.i.d. samples from $\probdist(\{\wgt{t}{\ell} \hd[\parvec](\epart{t}{\ell}, \epart{t+1}{i}) \}_{\ell=1}^{\N})$. Using the updated $\{\tstat[i]{t+1}\}_{i=1}^{\N}$, an estimate of $\post{0:t+1 \mid t;\parvec} \af{t+1} = \pred{t + 1 ; \parvec} \tstat{t+1;\parvec} \af{t+1}$ is obtained as $\N^{-1} \sum_{i=1}^{\N} \tstat[i]{t+1}$. As before, the algorithm is initialised by setting $\tstat[i]{0} = 0$ for $i \in \{1, \ldots, \N\}$. 

In its most basic form, also the previous approach has an $O(\N^2)$ complexity since it requires the normalising constant $\sum_{\ell = 1}^\N \wgt{t}{\ell} \hd[\parvec](\epart{t}{\ell}, \epart{t+1}{i})$ of the distribution $\probdist(\{\wgt{t}{\ell} \hd[\parvec](\epart{t}{\ell}, \epart{t+1}{i}) \}_{\ell=1}^{\N})$ to be computed particle-wise, i.e. for all $\epart{t+1}{i}$. In order to speed up the algorithm we apply the accept-reject sampling approach proposed by~\cite{douc:garivier:moulines:olsson:2010}. This technique presupposes that there exists a constant $\hkup > 0$ such that $\hd[\parvec](x,x') \leq \hkup$ for all $(x, x') \in \set{X}^2$, an assumption that is satisfied for most models. Then, in order to sample from $\probdist(\{\wgt{t}{\ell} \hd[\parvec](\epart{t}{\ell}, \epart{t+1}{i}) \}_{\ell=1}^{\N})$ a candidate $J^\ast \sim \probdist(\{\wgt{t}{i}\}_{i=1}^{\N})$ is accepted with probability $\hd[\parvec](\epart{t}{J^*}, \epart{t+1}{i})/\hkup$. This procedure is repeated until acceptance. Under certain mixing assumptions (see \autoref{ass:mixing} below) it can be shown (see~\citet[Proposition 2]{douc:garivier:moulines:olsson:2010} and \citet[Theorem~10]{olsson:westerborn:2014b}) that the expected number of trials needed for this approach, which is referred to as the \emph{particle-based, rapid incremental smoother} (PaRIS), to update $\{\tstat[i]{t}\}_{i=1}^{\N}$ to $\{\tstat[i]{t+1}\}_{i=1}^{\N}$ is $O(\K \N)$.

A key discovery of~\cite{olsson:westerborn:2014b} is that the algorithm converges, as $\N$ tends to infinity, for all fixed $\K \in \nsetpos$. In particular, \citet[Corollary~5]{olsson:westerborn:2014b} provide a CLT with an asymptotic variance that may, as long as $\K \geq 2$, be shown to grow only \emph{linearly} with $t$, which is optimal for a Monte Carlo estimator on the product space \citep[we refer to][Section~1, for a discussion]{olsson:westerborn:2014b}. In addition, the additional asymptotic variance imposed by the ``de-Rao-Blackwellisation'' strategy \eqref{eq:update:paris} is inversely proportional to $\K - 1$, which suggests that $\K$ should be kept at a modest value (in fact, using, say, $\K \in \{2, 3\}$ works typically well in simulations). On the other hand, in the case $\K = 1$, the estimator suffers from a degeneracy problem that resembles closely that of the Poor man's smoother, implying an asymptotic variance that grows quadratically fast with $t$. Similar stochastic stability aspects of the tangent filter estimator proposed in the present paper will be discussed in \autoref{sec:convergence}.


\section{Tangent filter estimation: main results}
\label{sec:main:results}

\subsection{PaRIS-based estimation of tangent filters}
\label{sec:algo}
{The estimator that we propose is based on the identity \eqref{eq:tangent:identity}, which we for a given $\parvec$ approximate online using the PaRIS algorithm. More specifically, we estimate, on-the-fly as new observations appear, the flow $\{\filtderiv{t; \parvec}\}_{t \in \nset}$ of tangent filters. This is done by running a particle filter targeting the predictor distributions while updating estimates $\{\tstat[i]{t}\}_{i=1}^{\N}$ of $\{\tstat{t;\parvec}\af{t;\parvec}(\epart{t}{i})\}_{i = 1}^{\N}$, where $\af{t; \parvec}$ is the complete data score given by \eqref{eq:complete:score}. The statistics $\{\tstat[i]{t}\}_{i=1}^{\N}$ are updated efficiently using the PaRIS updating rule~\eqref{eq:update:paris}. At each time-step $t$, our algorithm returns}
\begin{align}
	\filtderiv[part]{t; \parvec} \testf[t] = \frac{1}{\N} \sum_{i=1}^{\N} \left( \tstat[i]{t} - \frac{1}{\N}\sum_{j=1}^{\N} \tstat[j]{t} \right) \testf[t](\epart{t}{i})
\end{align}
as an estimate of $\filtderiv{t; \parvec} \testf[t]$ for any $\testf[t] \in \bmf{\alg{X}}$. We summarise the algorithm in the pseudo-code below, where every line with $i$ should be executed for all $i \in \{1,\ldots,\N\}$. \smallskip

\begin{algorithmic}[1]
\State draw $\epart{0}{i} \sim \Xinit$
\State set $\tstat[i]{0} \gets 0$
\For{$t \gets 0, 1, 2, \ldots$}
\State set $\wgt{t}{i} \gets \md{t;\parvec}(\epart{t}{i})$
\State draw $I_{t + 1}^i \sim \probdist(\{\wgt{t}{\ell}\}_{\ell=1}^{\N})$
\State draw $\epart{t + 1}{i} \sim \hd[{\parvec}](\epart{t}{I^i_{t + 1}}, \cdot)$
\For{$\k \gets 1, \ldots, \K$}
\State draw $\bi{t + 1}{i}{\k} \sim \probdist( \{ \wgt{t}{\ell} \hd[{\parvec}](\epart{t}{\ell}, \epart{t + 1}{i} ) \}_{\ell = 1}^\N)$
\EndFor
\State set 
$
\tstat[i]{t + 1} \gets \frac{1}{\K} \sum_{\k = 1}^{\K}\left( \tstat[\bi{t + 1}{i}{\k}]{t} + \addf{t}(\epart{t}{\bi{t + 1}{i}{\k}}, \epart{t + 1}{i}) \right)
$
\State set $\bar{\tau}_{t + 1} \gets \frac{1}{\N} \sum_{\ell = 1}^{\N} \tstat[\ell]{t + 1}$
\State set $\filtderiv[part]{t+1 ; \parvec} \gets \frac{1}{\N} \sum_{\ell = 1}^{\N} \left( \tstat[\ell]{t+1} - \bar{\tau}_{t+1} \right) \delta_{\epart{t+1}{\ell}}$
\EndFor
\end{algorithmic}

\begin{remark}
The previous algorithm can be viewed as a ``de-Rao-Blackwellisation'' of the tangent filter estimator proposed in \cite{delmoral:doucet:singh:2015}, which is based on the updating rule \eqref{eq:update:FFBSm} instead of \eqref{eq:update:paris}. With the exception of this important difference, the two algorithms follow the same recursion.  Consequently, the algorithm of \cite{delmoral:doucet:singh:2015} has an $O(\N^2)$ complexity.   
\end{remark}


\subsection{Theoretical results} 
\label{sec:convergence}
The first part of the convergence analysis of our algorithm will be carried through under the following, relatively mild, assumption.  
\begin{assumption} \label{ass:bounded:q:g}\

	\begin{enumerate}[(i)]
		\item For all $t \in\nset$ and $\parvec \in \Theta$, $\md{t;\parvec}$ is a positive and bounded measurable function.
		\item For all $\parvec \in \Theta$, $\hd[\parvec] \in \bmf{\alg{X}^{\varotimes 2}}$. 
	\end{enumerate}
\end{assumption}

\autoref{ass:bounded:q:g}(i) implies finiteness and positiveness of the particle importance weights, while \autoref{ass:bounded:q:g}(ii) allows, apart from some technical arguments needed in the proofs, the accept-reject sampling technique mentioned in \autoref{sub:smoothing_of_additive_functionals} to be used. 

We begin by establishing an exponential concentration inequality, valid for all finite particle sample sizes $\N$, for our PaRIS-based tangent filter estimator. This yields the strong ($\prob$-a.s.) consistency (as $\N$ tends to infinity) of the algorithm as a corollary. In addition, we state and establish a CLT with an asymptotic variance given by a sum of the asymptotic variance of the estimator proposed by~\cite{delmoral:doucet:singh:2015} and one term reflecting the additional variance introduced by the supplementary sampling performed in the PaRIS-version. Finally, we are able to derive, by operating with strong mixing assumptions, a time uniform bound on the asymptotic variance, which is inversely proportional to $\K - 1$. This guarantees that the algorithm is stochastically stable in the long run. All proofs, which use results obtained by \cite{olsson:westerborn:2014b}, are postponed to~\autoref{sec:proofs}. 

\subsubsection{Results on convergence}

\begin{theorem} \label{thm:hoeffding}
	Let~\autoref{ass:bounded:q:g} hold. Then for all $t \in \nset$, $\parvec \in \Theta$, $\testf[t] \in \bmf{\alg{X}}$, and $\K \in \nsetpos$ there exists $(c_t, \tilde{c}_t) \in (\mathbb{R}_+^\ast)^2$ (depending on $\theta$, $\K$, and $\testf[t]$) such that for all $\N \in \nsetpos$ and $\varepsilon \in \mathbb{R}_+$,
	$$
		\prob \left( \left| \filtderiv[part]{t; \parvec} \testf[t] - \filtderiv{t;\parvec} \testf[t] \right| \geq \varepsilon \right) \leq c_t \exp \left( - \tilde{c}_t \N \varepsilon (\varepsilon \wedge 1) \right).
	$$
\end{theorem}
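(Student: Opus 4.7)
The plan is to express $\filtderiv[part]{t;\parvec}\testf[t]$ as a smooth function of a handful of empirical averages that are, by known PaRIS results, already controlled at the required rate, and then to combine the pieces through a standard product/difference splitting. Introduce
\begin{align*}
A_\N &\eqdef \N^{-1} \sum_{i=1}^\N \tstat[i]{t} \testf[t](\epart{t}{i}), \qquad B_\N \eqdef \N^{-1} \sum_{i=1}^\N \tstat[i]{t}, \qquad C_\N \eqdef \partpred{t;\parvec} \testf[t],\\
A &\eqdef \pred{t;\parvec}(\tstat{t;\parvec}\af{t;\parvec} \cdot \testf[t]), \qquad B \eqdef \pred{t;\parvec}\tstat{t;\parvec}\af{t;\parvec}, \qquad C \eqdef \pred{t;\parvec}\testf[t],
\end{align*}
so that $\filtderiv[part]{t;\parvec}\testf[t] = A_\N - B_\N C_\N$ by direct inspection of the algorithm and $\filtderiv{t;\parvec}\testf[t] = A - BC$ by \eqref{eq:tangent:identity}. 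A telescoping of $B_\N C_\N - BC$ then gives the decomposition
\begin{equation*}
\filtderiv[part]{t;\parvec}\testf[t] - \filtderiv{t;\parvec}\testf[t] = (A_\N - A) - B_\N(C_\N - C) - (B_\N - B) C.
\end{equation*}

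For each of the three building blocks I would invoke a Hoeffding--Bernstein inequality of the target form $c_t \exp(-\tilde{c}_t \N \varepsilon (\varepsilon \wedge 1))$. The bound on $A_\N - A$, and, as the special case $\testf[t] \equiv 1$, on $B_\N - B$, is the PaRIS concentration inequality established in~\cite{olsson:westerborn:2014b}, applied to the additive function $\af{t;\parvec}$ built from the complete data score \eqref{eq:complete:score}; this step relies on the boundedness of each increment $\addf{s;\parvec}$ on $\set{X}^2$ for $0 \leq s < t$, which, combined with \autoref{ass:bounded:q:g}, guarantees that $\tstat{t;\parvec}\af{t;\parvec}$ is a bounded measurable function, so that the Rao--Blackwellised statistics $\tstat[i]{t}$ and their PaRIS replacements inherit uniform boundedness. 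The bound on $C_\N - C$ is the classical concentration inequality for the bootstrap particle filter.

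Combining the pieces is routine. Since $|C| \leq \supn{\testf[t]}$, the term $(B_\N - B) C$ is controlled directly by the $B$-bound. For the cross term $B_\N(C_\N - C)$ I would use the splitting
\begin{equation*}
\prob\!\left( |B_\N(C_\N - C)| \geq \varepsilon/3 \right) \leq \prob(|B_\N - B| > 1) + \prob\!\left( (|B|+1)|C_\N - C| \geq \varepsilon/3 \right),
\end{equation*}
the first probability being exponentially small by the $B$-bound and the second reducing to the $C$-bound after absorbing the deterministic factor $|B|+1$ into the constants. A union bound with $\varepsilon/3$ in place of $\varepsilon$ in each of the three pieces, together with the elementary inequality $\varepsilon(\varepsilon\wedge 1) \leq (\varepsilon/3)(\varepsilon/3 \wedge 1) \cdot 9$, yields the claimed Bernstein-type bound.

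The main obstacle is verifying that the PaRIS concentration inequality of~\cite{olsson:westerborn:2014b} transfers to $A_\N$, which carries the extra multiplicative factor $\testf[t](\epart{t}{i})$ against each particle-level statistic $\tstat[i]{t}$; this should follow by a minor adaptation of the inductive argument in that reference, extending the class of test functions to products of the form $\testf[t] \otimes (\af{t;\parvec} \text{-weighting})$. The constants $(c_t, \tilde{c}_t)$ obtained in this way depend non-explicitly on $t$, $\parvec$, $\K$, $\supn{\testf[t]}$ and on $\supn{\addf{s;\parvec}}$ for $s < t$, and in general degrade with $t$; sharpening this to a time-uniform estimate requires the strong mixing conditions used in the subsequent CLT analysis and is not attempted here.
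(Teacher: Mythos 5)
Your route is, up to regrouping, the paper's own: writing $\filtderiv[part]{t;\parvec}\testf[t] = A_\N - B_\N C_\N$ and $\filtderiv{t;\parvec}\testf[t] = A - BC$ and splitting into three empirical-average errors is equivalent to the centered decomposition \eqref{eq:key:decomposition}, and the key external ingredient is the same, namely a Hoeffding bound for PaRIS-weighted averages of the form $\N^{-1}\sum_{i=1}^\N \wgt{t}{i}\{\tstat[i]{t}f(\epart{t}{i}) + \tilde{f}(\epart{t}{i})\}$, which the paper isolates as Proposition~\ref{lemma:hoeffding} (extracted by inspection from the proof of Theorem~1(i) of \cite{olsson:westerborn:2014b}). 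In particular, the ``extension of the test-function class'' you flag as the main obstacle for $A_\N$ is not needed: choosing $f = \testf[t]/\md{t;\parvec}$ and $\tilde{f} \equiv 0$ (so that $f\md{t;\parvec} = \testf[t] \in \bmf{\alg{X}}$) makes the weighted average exactly $A_\N$, which is precisely how the paper applies the proposition; likewise $f \equiv 1/\md{t;\parvec}$ handles $B_\N$ and $\tilde{f} = \testf[t]/\md{t;\parvec}$ handles $C_\N$. Your remark that the argument needs the increments $\addf{s;\parvec}$ to be bounded is correct and is implicit in the paper as well, since the proposition assumes $\af{t} \in \bmf{\alg{X}^{\varotimes (t+1)}}$.

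The one step that does not deliver the stated bound as written is your treatment of the cross term: the splitting $\prob(|B_\N(C_\N - C)| \geq \varepsilon/3) \leq \prob(|B_\N - B| > 1) + \prob\bigl((|B|+1)|C_\N - C| \geq \varepsilon/3\bigr)$ leaves the $\varepsilon$-independent remainder $\prob(|B_\N - B| > 1) \leq a_t \exp(-\tilde{a}_t \N)$, which for large $\varepsilon$ is not dominated by $c_t \exp(-\tilde{c}_t \N \varepsilon(\varepsilon \wedge 1))$, whose exponent grows linearly in $\varepsilon$; so the claimed inequality does not follow for all $\varepsilon \in \mathbb{R}_+$ from your three pieces alone. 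The repair is minor: either note that, with bounded increments and bounded $\testf[t]$, the statistics $\tstat[i]{t}$ and hence the whole error are deterministically bounded by some $M_t$, so the event is empty for $\varepsilon > M_t$ and the constants can be adjusted on the compact range; or, cleaner and exactly as in the paper, keep the product of the two \emph{centered} deviations $(B_\N - B)(C_\N - C)$ as the third term of the decomposition and bound the probability that it exceeds $\varepsilon/2$ by requiring one of the two factors to exceed $\sqrt{\varepsilon/2}$, each controlled by the sub-Gaussian bounds; this symmetric split is what produces the $\exp(-c\N\varepsilon)$ tail and hence the $\varepsilon(\varepsilon \wedge 1)$ exponent in the statement. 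With that adjustment your proof coincides with the paper's.
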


The strong consistency of the estimator follows. 
 
\begin{corollary} \label{corollary:as:conv}
	Let~\autoref{ass:bounded:q:g} hold. Then for all $t \in \nset$, $\parvec \in \Theta$, $\testf[t] \in \bmf{\alg{X}}$, and $\K \in \nsetpos$ it holds, $\prob$-a.s.,  
	$$
		\lim_{\N \to \infty} \filtderiv[part]{t; \parvec} \testf[t] = \filtderiv{t; \parvec} \testf[t].
	$$
\end{corollary}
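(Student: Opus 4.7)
The plan is to derive the almost-sure convergence directly from the exponential concentration inequality of \autoref{thm:hoeffding} via the first Borel--Cantelli lemma. The key observation I would exploit is that the constants $(c_t, \tilde{c}_t)$ supplied by that theorem do \emph{not} depend on the particle sample size $\N$, which is precisely what yields summability of the tail bound over $\N$.

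More concretely, I would fix $t \in \nset$, $\parvec \in \Theta$, $\testf[t] \in \bmf{\alg{X}}$, and $\K \in \nsetpos$, and let $(c_t, \tilde{c}_t) \in (\mathbb{R}_+^\ast)^2$ be the pair of constants delivered by \autoref{thm:hoeffding}. For any $\varepsilon \in (0, 1]$, the inequality $\varepsilon \wedge 1 = \varepsilon$ combined with \autoref{thm:hoeffding} gives
$$
	\sum_{\N = 1}^{\infty} \prob \left( \left| \filtderiv[part]{t; \parvec} \testf[t] - \filtderiv{t;\parvec} \testf[t] \right| \geq \varepsilon \right) \leq c_t \sum_{\N = 1}^{\infty} \exp\!\left( -\tilde{c}_t \N \varepsilon^2 \right) < \infty.
$$
By the first Borel--Cantelli lemma, the event $\{ | \filtderiv[part]{t; \parvec} \testf[t] - \filtderiv{t;\parvec} \testf[t] | \geq \varepsilon \}$ therefore occurs for at most finitely many $\N$, $\prob$-almost surely.

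To upgrade this from a fixed $\varepsilon$ to the full limit statement, I would take the countable sequence $\varepsilon_k \eqdef 1/k$, $k \in \nsetpos$, and intersect the corresponding $\prob$-almost sure sets of full measure. The intersection still has probability one, and on it the standard $(\varepsilon, \N_0)$ characterisation of convergence immediately delivers the conclusion $\lim_{\N \to \infty} \filtderiv[part]{t; \parvec} \testf[t] = \filtderiv{t;\parvec} \testf[t]$.

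I do not anticipate any genuine obstacle: once the Hoeffding-type inequality \autoref{thm:hoeffding} is in hand, this corollary is a textbook Borel--Cantelli argument. The only subtlety worth underlining is that the argument rests entirely on the $\N$-independence of $c_t$ and $\tilde{c}_t$, together with the fact that the exponent is linear in $\N$; both features are already embedded in the statement of \autoref{thm:hoeffding}.
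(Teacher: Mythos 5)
Your proof is correct and follows essentially the same route as the paper: both arguments rest on the $\N$-independence of $(c_t,\tilde{c}_t)$ in \autoref{thm:hoeffding} and the resulting summability of the tail bounds over $\N$, the only difference being that you invoke the first Borel--Cantelli lemma by name while the paper carries out the equivalent union-bound computation explicitly (bounding $\prob(\bigcup_{\N \geq n}\{\cdot\})$ by a convergent geometric sum and letting $n \to \infty$, then taking $\varepsilon = 1/k$ along a countable sequence exactly as you do).
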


We now set focus on weak convergence. For this purpose, we introduce, for all $t \in \nset$, the unnormalised kernels 
$$
\lk{t;\parvec} \testf (x) \eqdef \md{t;\parvec}(x) \hk_{\parvec} \testf(x), \quad (x, \testf) \in \set{X} \times \bmf{\alg{X}},
$$ 
and, for all $(s,t) \in \nset^2$ such that $s \leq t$, the retro-prospective kernels 
\[
	\begin{split}
		\BF{s}{t;\parvec}\testf(x_{s}) & \eqdef \int \int \testf(x_{0:t}) \md{t;\parvec}(x_t) \, \tstat{s;\parvec}(x_s, \rmd x_{0:s-1}) \, \lk{s;\parvec} \varotimes \cdots \varotimes \lk{t-1;\parvec} (x_s, \rmd x_{s+1:t}) , \\
		\BFcent{s}{t;\parvec} \testf(x_{s}) & \eqdef \BF{s}{t;\parvec}(\testf - \post{0:t \mid t-1;\parvec} \testf)(x_s),  
	\end{split}
\]
for $(x_s, \testf) \in \set{X} \times \bmf{\alg{X}^{\varotimes (t + 1)}}$, operating simultaneously in the backward and forward directions.

\begin{theorem} \label{thm:CLT}
	Let~\autoref{ass:bounded:q:g} hold. For all $t \in \nset$, $\parvec \in \Theta$, $\K \in \nsetpos$, and $\testf[t] \in \bmf{\alg{X}}$, as $\N \to \infty$,
	$$
		\sqrt{N} \left( \filtderiv[part]{t;\parvec}\testf[t] - \filtderiv{t;\parvec}\testf[t] \right) \convd \asvard{t;\parvec}{\testf[t]} Z,
	$$
	where $Z$ is a standard Gaussian random variable and 
	\begin{equation}
		\sigma_{t;\parvec}^2(\testf[t]) \eqdef \tilde{\sigma}_{t;\parvec}^2(\testf[t]) + \sum_{s=0}^{t - 1} \sum_{\ell=0}^{s} \K^{\ell - (s+1)} \asvardPaRIS{s}{\ell}{t}{\testf[t]}, \label{eq:asvar:short}
	\end{equation}	
	where 
	\begin{equation} \label{eq:as:var:del:moral}
		\tilde{\sigma}_{t;\parvec}^2(\testf[t]) \eqdef \sum_{s=0}^{t-1} \frac{\pred{s+1;\parvec} \BFcent[2]{s+1}{t;\parvec} \{ (\af{t;\parvec} - \pred{t;\parvec} \tstat{t;\parvec} \af{t;\parvec})(\testf[t] - \pred{t;\parvec} \testf[t])\}}{(\pred{s+1;\parvec} \lk{s+1;\parvec} \cdots \lk{t-1;\parvec} \1{\set{X}})^2} 
	\end{equation}
	is the asymptotic variance of the tangent filter estimator proposed by~\cite{delmoral:doucet:singh:2015} and 
	\begin{multline}
		\asvardPaRIS{s}{\ell}{t}{\testf[t]} \eqdef \\
		\frac{\pred{\ell+1;\parvec} \{ \bk{\post{\ell;\parvec}}(\tstat{\ell;\parvec} \af{\ell;\parvec} + \addf{\ell;\parvec} - \tstat{\ell+1;\parvec} \af{\ell + 1;\parvec})^2 \lk{\ell + 1;\parvec} \cdots \lk{s;\parvec}( \lk{s+1;\parvec} \cdots \lk{t-1;\parvec} \{\testf[t] - \pred{t;\parvec} \testf[t]\})^2\} }{(\pred{\ell+1;\parvec} \lk{\ell + 1;\parvec} \cdots \lk{s;\parvec} \1{\set{X}})(\pred{s+1;\parvec} \lk{s+1;\parvec} \cdots \lk{t-1;\parvec} \1{\set{X}})^2}. 
	\end{multline}
 \end{theorem}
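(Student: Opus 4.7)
The plan is to reduce the $\sqrt{\N}$-rescaled error of $\filtderiv[part]{t;\parvec}\testf[t]$ to that of a single PaRIS-type smoother applied to a centred test function, and then to invoke the CLT machinery already developed for the PaRIS in \cite{olsson:westerborn:2014b}, thereby identifying the two summands of the announced asymptotic variance $\asvard[2]{t;\parvec}{\testf[t]}$ with the variance $\tilde{\sigma}^2_{t;\parvec}(\testf[t])$ of the quadratic-cost estimator of \cite{delmoral:doucet:singh:2015} and with the additional ``de-Rao-Blackwellisation'' variance coming from the size-$\K$ PaRIS sampling.

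The first step is a Slutsky-type linearisation. Writing
\begin{equation*}
\filtderiv[part]{t;\parvec}\testf[t] = \partpred{t;\parvec}(\tilde{\tau}_t\,\testf[t]) - \bar{\tau}_t\,\partpred{t;\parvec}\testf[t],
\end{equation*}
where $\tilde{\tau}_t$ denotes the random function $\epart{t}{i}\mapsto\tstat[i]{t}$ and $\bar{\tau}_t = \partpred{t;\parvec}\tilde{\tau}_t$, and expanding the analogous identity for the target $\filtderiv{t;\parvec}\testf[t]$, I would subtract and apply $a_Nb_N - AB = (a_N-A)b_N + A(b_N-B)$ together with the strong consistency statement of \autoref{corollary:as:conv} to obtain
\begin{equation*}
\sqrt{\N}\bigl(\filtderiv[part]{t;\parvec}\testf[t] - \filtderiv{t;\parvec}\testf[t]\bigr) = \sqrt{\N}\Bigl[\partpred{t;\parvec}\bigl((\tilde{\tau}_t - c)(\testf[t] - \pred{t;\parvec}\testf[t])\bigr) - \filtderiv{t;\parvec}\testf[t]\Bigr] + o_{\prob}(1),
\end{equation*}
with $c \eqdef \pred{t;\parvec}\tstat{t;\parvec}\af{t;\parvec}$; thus the full fluctuation collapses to a single PaRIS-type smoother error centred around $\filtderiv{t;\parvec}\testf[t] = \pred{t;\parvec}\bigl((\tstat{t;\parvec}\af{t;\parvec}-c)(\testf[t]-\pred{t;\parvec}\testf[t])\bigr)$.

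Next, I would split this remaining fluctuation by introducing the Rao-Blackwellised counterpart $\tstattil{i}{t}$ of $\tstat[i]{t}$, obtained by replacing, in the recursion~\eqref{eq:update:paris}, every size-$\K$ Monte Carlo average by the corresponding full backward sum~\eqref{eq:update:FFBSm}, and writing $\tstat[i]{t}-c = (\tstattil{i}{t}-c) + (\tstat[i]{t}-\tstattil{i}{t})$. The first piece reproduces exactly the tangent filter estimator of \cite{delmoral:doucet:singh:2015} applied to $\testf[t]$, whose CLT with asymptotic variance $\tilde{\sigma}^2_{t;\parvec}(\testf[t])$ in the form \eqref{eq:as:var:del:moral} has already been established there; the retro-prospective kernels $\BFcent{s+1}{t;\parvec}$ appear through the standard recursive unravelling of the FFBSm-type recursion, in which each mutation-selection fluctuation at time $s+1$ is pushed forward through $\lk{s+1;\parvec}\cdots\lk{t-1;\parvec}$ and backward through $\tstat{s+1;\parvec}$. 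For the second piece, conditioning on the bootstrap particle system, the difference $\tstat[i]{t} - \tstattil{i}{t}$ unrolls via \eqref{eq:update:paris} as a telescoping sum of conditionally centred martingale-difference increments, one per backward step, with conditional variance of order $\K^{-1}$ and of the form $(\tstat{\ell;\parvec}\af{\ell;\parvec} + \addf{\ell;\parvec} - \tstat{\ell+1;\parvec}\af{\ell+1;\parvec})^2$ integrated against $\bk{\post{\ell;\parvec}}$; propagating these increments forward through $\lk{\ell+1;\parvec}\cdots\lk{t-1;\parvec}$ and summing over $\ell\leq s\leq t-1$ reproduces, in the particle-filter limit, the double sum $\sum_{s,\ell}\K^{\ell-(s+1)}\asvardPaRIS{s}{\ell}{t}{\testf[t]}$, along the lines of the proof of Corollary~5 of \cite{olsson:westerborn:2014b}.

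The main obstacle will be to rigorously justify the joint Gaussian limit of the two pieces with variance equal to the sum of the individual variances. Here I would enlarge the filtration to include the entire particle filter and observe that the extra PaRIS multinomial samples $\{\bi{u+1}{i}{\k}\}$ are, conditionally on that filtration, drawn independently across $(i,\k,u)$, so that the ``de-Rao-Blackwellisation'' contributions form a martingale-difference sequence orthogonal in $\mathcal{L}^2$ to the mutation-selection fluctuations driving the first piece; a joint CLT then follows from a standard triangular-array martingale CLT, exactly as in the PaRIS analysis of \cite{olsson:westerborn:2014b}. Matching the resulting variance to the precise form announced in~\eqref{eq:asvar:short} is then a bookkeeping exercise, requiring only careful application of the backward recursion \eqref{eq:tstat:update} and careful tracking of the powers of $\K$ accumulated along the forward propagation through the unnormalised kernels $\lk{\cdot;\parvec}$.
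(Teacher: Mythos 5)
Your proposal follows essentially the same route as the paper: the identical key decomposition of $\filtderiv[part]{t;\parvec}\testf[t]-\filtderiv{t;\parvec}\testf[t]$ into a PaRIS-type estimate of the centred functional $(\tstat{t;\parvec}\af{t;\parvec}-\pred{t;\parvec}\tstat{t;\parvec}\af{t;\parvec})(\testf[t]-\pred{t;\parvec}\testf[t])$ plus an asymptotically negligible cross term, with the CLT for the main term supplied by the PaRIS machinery of \cite{olsson:westerborn:2014b} and the Rao-Blackwellised variance identified with that of \cite{delmoral:doucet:singh:2015} (the paper imports this step wholesale as Proposition~\ref{lemma:CLT}/Lemma~\ref{lemma:lemma:CLT}, whereas you sketch its proof via the conditionally unbiased coupling with the statistics \eqref{eq:update:FFBSm}, which is just an unfolding of the same argument). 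The one point to tighten is the cross term: almost-sure consistency alone (\autoref{corollary:as:conv}) does not control it after multiplication by $\sqrt{\N}$; as in the paper, combine the convergence in probability of $\N^{-1}\sum_{i=1}^{\N}\tau_t^i$ (from \autoref{lemma:hoeffding}) with the $\sqrt{\N}$-CLT for $\N^{-1}\sum_{i=1}^{\N}\testf[t](\epart{t}{i})-\pred{t;\parvec}\testf[t]$ and Slutsky's theorem.
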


\begin{remark}
	We note that for all $t \in \nset$, 
	\begin{align}
		\lim_{\K \to \infty} \asvard[2]{t;\parvec}{\testf[t]} = \tilde{\sigma}_{t;\parvec}^2(\testf[t]). 
	\end{align}
	This is in line with our expectations, as the tangent filter estimator proposed by~\cite{delmoral:doucet:singh:2015} can be viewed as a Rao-Blackwellisation of our estimator.  
\end{remark}

\subsubsection{Results on stochastic stability}

We now set focus on the numerical stability of the algorithm, and show that the error of the tangent filter estimator proposed by us stays uniformly bounded in time. There are several ways of formulating such stability, and in the present paper we follow \cite{delmoral:guionnet:2001,douc:moulines:olsson:2014} and establish a time-uniform bound on the asymptotic variance of the estimator. The analysis requires \autoref{ass:bounded:q:g} to be sharpened as follows.

\begin{assumption} \label{ass:mixing} \
	\begin{itemize}
		\item[(i)] There exists $\rho > 1$ such that for all $\parvec \in \Theta$ and $(x, \tilde{x}) \in \set{X}^2$,
		$$
		 	\rho^{-1} \leq \hd[\parvec](x, x') \leq \rho.
		 $$ 
		 \item[(ii)] There exist constants $0 < \mdlow < \mdup < \infty$ such that for all $t \in \nset$ all $\parvec \in \Theta$, and $x \in \set{X}$, 
		 $$ 
			 \mdlow \leq \md{t;\parvec}(x) \leq \mdup. 
		 $$
	\end{itemize}
\end{assumption}
	 
Under \autoref{ass:mixing} we define
$$
\mr{} \eqdef 1 - \rho^{-2}.
$$

These---admittedly very restrictive---\emph{strong mixing assumptions} require typically the state space $\set{X}$ of the underlying Markov chain to be a compact set. Nevertheless, such assumptions are standard in the literature of SMC analysis; see~\cite{delmoral:guionnet:2001} and, e.g., \cite{crisan:heine:2008,douc:moulines:olsson:2014} for refinements. Finally, we assume that the terms \eqref{eq:complete:score} of the complete data score function are uniformly bounded. 

\begin{assumption} \label{ass:bdd} 
	There exists a constant $\hbd \in \mathbb{R}_{+}$ such that for all $s \in \nset$ and $\parvec \in \Theta$, 
	$$
	\oscn{\addf{s;\parvec}} \leq \hbd.
	$$ 
\end{assumption}
{In conformity with the strong mixing assumptions, also  \autoref{ass:bdd} requires typically the state space $\set{X}$ to be compact.} 

The previous assumptions allow us to establish the following theorem, providing a time-uniform bound on the sequence $\{ \sigma_{t;\parvec}^2(\testf) \}_{t \in \nset}$ {of asymptotic variances}. {Generally, when the objective function is of additive form, the variance of a Monte Carlo estimator increases typically linearly with the number of terms, or, the dimension of the underlying product space \citep[see e.g.][Section~1, for a discussion]{olsson:westerborn:2014b}. Thus, as our estimator is based on the identities \eqref{eq:tangent:identity:first:step} and \eqref{eq:tangent:identity}, which express the tangent filter in terms of additive smoothed expectations over the path space, it may seem surprising that we are able to derive a time-uniform bound on the asymptotic variance. However, 
recall from \eqref{eq:tangent:identity:first:step} that the tangent filter $\filtderiv{t; \parvec} f_t$ coincides with the covariance of $f_t$ and the complete data score function $\af{t;\parvec}$ under $\post{0:t \mid t - 1; \parvec}$. Thus, when the model exhibits forgetting properties---which is the case under the strong mixing assumptions above---the sequence $\{ \filtderiv{t; \parvec} f_t \}_{t \in \nset}$ stays bounded, and time uniform convergence of the Monte Carlo estimator is hence within reach.}

 \begin{theorem}\label{thm:bdd:var}
	Let~\autoref{ass:mixing} and~\autoref{ass:bdd} hold. Then there exist constants $(c, \tilde{c}) \in \rset_+^2$ such that for all $t \in \nset$, $\theta \in \Theta$, $\K \geq 2$, and $\testf[t] \in \bmf{\alg{X}}$, 
	\begin{equation}
		\asvard[2]{t;\parvec}{\testf[t]} \leq \| \testf[t] \|_\infty^2 \left( c + \tilde{c} \frac{1}{(\K - 1)(1 - \mr^2)} \right) \label{eq:bound:asvar}
	\end{equation}
	and 
	\begin{equation} \label{FFBSm:var:bound}
		\tilde{\sigma}_{t;\parvec}^2(\testf[t]) \leq c \| \testf[t] \|_\infty^2. 
	\end{equation}
\end{theorem}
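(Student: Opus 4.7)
The plan is to exploit the exponential forgetting induced by \autoref{ass:mixing}, which gives, for every fixed $\parvec$, a uniform lower bound $\mr^{-1} \leq \mdlow \rho^{-1}$ on the Dobrushin contraction of both the forward Feynman--Kac semigroup $\lk{s;\parvec} \cdots \lk{t-1;\parvec}$ (properly normalised) and the backward Markov kernels $\bk{\post{s;\parvec}}$. Under \autoref{ass:mixing} the ratios $\lk{s;\parvec} \cdots \lk{t-1;\parvec} \testf(x) / \pred{s;\parvec} \lk{s;\parvec} \cdots \lk{t-1;\parvec} \1{\set{X}}$ stay bounded by $\rho^2 \supn{\testf}$, and oscillations contract as $\mr^{t-s}$; these are the standard estimates used in the literature on stability of particle filters and have already been applied by \cite{olsson:westerborn:2014b,delmoral:doucet:singh:2015}.

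For the bound \eqref{FFBSm:var:bound} on $\tilde{\sigma}_{t;\parvec}^2(\testf[t])$, the main point is that the numerator $\pred{s+1;\parvec} \BFcent[2]{s+1}{t;\parvec}\{(\af{t;\parvec} - \pred{t;\parvec} \tstat{t;\parvec} \af{t;\parvec})(\testf[t] - \pred{t;\parvec} \testf[t])\}$ can be made to decay geometrically in $t-s$. Writing the centered score as a telescoping sum $\af{t;\parvec} - \pred{t;\parvec} \tstat{t;\parvec} \af{t;\parvec} = \sum_{r=0}^{t-1} (\addf{r;\parvec} - \pred{t;\parvec} \tstat{t;\parvec} \addf{r;\parvec})$, and noting that \autoref{ass:bdd} bounds $\oscn{\addf{r;\parvec}} \leq \hbd$ uniformly, the backward forgetting of $\bk{\post{r;\parvec}} \cdots \bk{\post{t-1;\parvec}}$ at rate $\mr^{t-r}$ gives that each term has a representation whose $\BFcent{s+1}{t;\parvec}$-action is bounded by a constant times $\mr^{|t-r| \vee |s-r|}\supn{\testf[t]}$. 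Dividing by the squared forward normaliser (which is bounded below via mixing) and summing over $r$ and $s$ produces a geometric double series whose value is uniformly $O((1-\mr)^{-2})$, absorbed into the constant $c\|\testf[t]\|_\infty^2$.

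For \eqref{eq:bound:asvar}, decompose the additional PaRIS contribution as $\sum_{s=0}^{t-1}\sum_{\ell=0}^s \K^{\ell-(s+1)} \asvardPaRIS{s}{\ell}{t}{\testf[t]}$. The inner factor $\bk{\post{\ell;\parvec}}(\tstat{\ell;\parvec} \af{\ell;\parvec} + \addf{\ell;\parvec} - \tstat{\ell+1;\parvec} \af{\ell+1;\parvec})^2$ can be written as a conditional variance of a one-step backward increment; by \autoref{ass:bdd} and the uniform forgetting of the backward chain, its oscillation is controlled by a $\hbd$--dependent constant independent of $\ell$. The surrounding Feynman--Kac composition $\lk{\ell+1;\parvec} \cdots \lk{s;\parvec}(\lk{s+1;\parvec} \cdots \lk{t-1;\parvec}\{\testf[t]-\pred{t;\parvec}\testf[t]\})^2$, divided by the product of normalisers, is bounded by $C \supn{\testf[t]}^2 \mr^{2(t-s-1)}$ through the contraction of the normalised semigroup applied to a centred function. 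The geometric series in $\ell$ yields $\sum_{\ell=0}^{s}\K^{\ell-(s+1)} \leq 1/(\K-1)$, and then the series in $s$ yields $\sum_{s=0}^{t-1} \mr^{2(t-s-1)} \leq 1/(1-\mr^2)$, producing the announced factor $\tilde{c}\supn{\testf[t]}^2/((\K-1)(1-\mr^2))$.

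The main obstacle is the rigorous handling of the retro-prospective kernel $\BF{s+1}{t;\parvec}$, which intertwines backward smoothing with a forward Feynman--Kac flow: one has to show that applying $\BFcent{s+1}{t;\parvec}$ to the product of the centred complete score and of $\testf[t] - \pred{t;\parvec}\testf[t]$ exhibits forgetting in both directions simultaneously, rather than just in one. I would accomplish this by rewriting the integrand as a sum of single-time increments, introducing a pivot time $r$, and splitting the action of $\BFcent{s+1}{t;\parvec}$ into a backward part on $[0,s]$ (contracting as $\mr^{s-r}$) and a forward part on $[s+1,t]$ (contracting as $\mr^{t-r}$), so that each term is dominated by $C\hbd\supn{\testf[t]}\mr^{|s+1-r|+|t-r|}$; the resulting triple summation is absolutely convergent uniformly in $t$, which yields the claimed time-uniform constants $c,\tilde{c}$.
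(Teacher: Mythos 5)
Your handling of the PaRIS contribution is essentially the paper's own proof: a bound on the one-step backward increment $\tstat{\ell;\parvec}\af{\ell;\parvec}+\addf{\ell;\parvec}-\tstat{\ell+1;\parvec}\af{\ell+1;\parvec}$ that is uniform in $\ell$ and $t$, obtained from \autoref{ass:bdd} together with backward forgetting (the paper extracts exactly this from the proof of Theorem~8 in \cite{olsson:westerborn:2014b}); geometric contraction at rate $\mr^{t-s}$ of the normalised forward semigroup applied to the centred function $\testf[t]-\pred{t;\parvec}\testf[t]$ (the paper invokes Lemma~10 of \cite{douc:garivier:moulines:olsson:2010}); comparability of the normalising constants under \autoref{ass:mixing}; and then precisely the two geometric sums, over $\ell$ giving the factor $(\K-1)^{-1}$ and over $s$ giving $(1-\mr^2)^{-1}$. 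The only divergence concerns \eqref{FFBSm:var:bound}: the paper does not prove it at all but cites Theorem~3.2 of \cite{delmoral:doucet:singh:2015}, whereas you sketch a direct telescoping/bidirectional-forgetting argument, which is in substance that reference's proof; as written it is heuristic (for increments with pivot $r>s$ the claimed decay $\mr^{|t-r|\vee|s-r|}$ needs the centring of $\testf[t]$ combined with forgetting from time $s+1$, and the square in $\BFcent[2]{s+1}{t;\parvec}$ must be carried through the estimate), but nothing hinges on this since the citation suffices. One small slip: the preliminary claim of ``a uniform lower bound $\mr^{-1}\leq\mdlow\rho^{-1}$ on the Dobrushin contraction'' is garbled --- under \autoref{ass:mixing} the relevant statement is that the Dobrushin coefficient of the normalised kernels is bounded by $\mr=1-\rho^{-2}$ --- though this does not affect the rest of your argument.
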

The bound \eqref{FFBSm:var:bound} on the first term \eqref{eq:as:var:del:moral} of the asymptotic variance, corresponding to the variance of the Rao-Blackwellised algorithm, was derived by \cite{delmoral:doucet:singh:2015}. As expected from the theoretical results obtained by \cite{olsson:westerborn:2014b}, the bound on the second term is inversely proportional to $\K - 1$, which indicates that boosting the number $\K$ of PaRISian backward draws from a moderate to a very large value will have negligible effect on the precision. We hence recommend keeping $\K$ low, say lower than $5$, in order to gain computational efficiency. A more detailed discussion on the choice of $\K$ can be found in~\cite{olsson:westerborn:2014b}.



\section{Application to recursive maximum likelihood estimation}
\label{sec:par:est}

\subsection{Batch mode implementation}

Given a fixed data-record $y_{0:t}$, our goal is to perform maximum likelihood estimation of $\parvec$, i.e., to find the vector of parameters $\parvec^{*} \in \parspace$ such that $\parvec^{*} = \arg\max_{\parvec \in \parspace} \llh{\parvec}(y_{0:t})$, where $\llh{\parvec}(y_{0:t})$ is the likelihood of the observed data given in~\eqref{eq:llh}. Equivalently, we may maximise the log-likelihood $\logllh{\parvec}(y_{0:t}) = \log \llh{\parvec}(y_{0:t})$. There are many different approaches to such maximisation; see~\cite[Chapter 10]{cappe:moulines:ryden:2005} for a more general overview of different approaches.

Here we will focus on the following \emph{Robbins-Monro scheme}: at iteration $n$, let 
$$
\parvec_n = \parvec_{n - 1} + \gamma_n Z_n, 
$$ 
where $Z_n$ is a noisy measurement of $\deriv \logllh{\parvec}(y_{0:t}) |_{\parvec = \parvec_{n-1}}$, i.e., the score of the observed data evaluated in $\parvec = \parvec_{n-1}$, and $\{\gamma_n\}_{n \in \nsetpos}$ a sequence of positive step-sizes satisfying the regular stochastic approximation requirements $\sum_{n = 1}^\infty \gamma_n = \infty$ and $\sum_{n = 1}^\infty \gamma_n^2 < \infty$. Note that this approach requires approximation of $Z_n$ at each iteration of the algorithm. If the number of observations is very large computing $Z_n$ is costly, and since many iterations are often required for convergence this results in an impractical algorithm. Moreover, if we receive a new observation $Z_n$ needs to be recalculated which turns the procedure into an offline algorithm.

\subsection{Online implementation: PaRISian RML}
\label{sec:rml:paris}

We sketch the basic principles of recursive maximum likelihood. {First note that since $\{ (X_t, Y_t) \}_{t \in \nset}$ is a Markov chain, the {quadruple} $\{ (X_t, Y_t, \pred{t;\parvec}, \filtderiv{t; \parvec}) \}_{t \in \nset}$ forms, by \eqref{eq:pred:to:filt} and \eqref{eq:filt:to:pred} and for all $\parvec \in \Theta$, a Markov chain as well. In the case where the state space $\set{X}$ is a finite set, \cite{legland:mevel:1996} showed that this chain is ergodic under certain mixing assumptions.} This result was later extended by \cite{douc:matias:2002} to the case where $\set{X}$ is compact. Now, assume that data is generated by a model parameterised by a {distinctive} $\truepar \in \Theta$; then, denoting by $\stat_{\parvec, \truepar}$ the stationary distribution of this chain and $\tilde{\stat}_{\parvec, \truepar}(\cdot) = \stat_{\parvec, \truepar}(\set{X} \times \cdot)$ its marginal with respect to the last three components, it holds for all $\parvec \in \Theta$, $\prob$-a.s., 
\begin{multline} \label{eq:erg:limit}
\lim_{t \to \infty} \frac{1}{t} \deriv \logllh{\parvec}(Y_{0:t}) = \lim_{t \to \infty} \frac{1}{t} \sum_{s = 0}^t \deriv \logllh{\parvec}(Y_s \mid Y_{0:s - 1})  
= \lim_{t \to \infty} \frac{1}{t} \sum_{s = 0}^t \frac{\pred{s; \parvec}(\deriv \md{s;\parvec}) + \filtderiv{s; \parvec} \md{s ;\parvec}}{\pred{s; \parvec} \md{s; \parvec}} \\
 = \iiint \frac{\pi(\deriv \md{\parvec}(\cdot, y)) + \eta (\md{\parvec}(\cdot, y))}{\pi (\md{\parvec}(\cdot, y))} \, \tilde{\stat}_{\parvec, \truepar}(\rmd (y, \pi, \eta)) \defeq \lambda(\parvec, \truepar),  
\end{multline}
where $\pred{s; \parvec} \md{s ; \parvec}$, $\filtderiv{s; \parvec} \md{s ;\parvec}$, and $\pred{s; \parvec} \md{s; \parvec}$ depend implicitly on $Y_{0:s}$. {These equations follow, in order, by the decomposition \eqref{eq:log-likelihood:sum} of the log-likelihood, by applying the nabla operator, and by applying Birkhoff's ergodic theorem.} By indentifiability \cite[see, e.g.,][Section~12.4]{cappe:moulines:ryden:2005}, the true parameter $\truepar$ solves $\lambda(\parvec, \truepar) = 0$, and the task of solving this equation may be cast into the framework of \emph{stochastic approximation} and the \emph{Robbins-Monro algorithm}
\begin{equation}
	\parvec_{t+1} = \parvec_t + \gamma_{t+1} \zeta_{t+1}, \quad t \in \nset, 
	\label{eq:par:update}
\end{equation}
where $\zeta_{t+1}$ is a noisy observation of $\lambda(\parvec_t, \truepar)$. Ideally, such a noisy observation would be formed by estimating $\lambda(\parvec_t, \truepar)$ 
by a draw from $\tilde{\stat}_{\parvec, \truepar}$, which is, needless to say, not possible in practice. Thus, by introducing one more level of approximation and allowing for Markovian perturbations, one may instead estimate, following \eqref{eq:tangent:identity}, $\lambda(\parvec_t, \truepar)$ by 
\begin{equation} \label{eq:def:zeta}
	\zeta_{t + 1} \eqdef \frac{\zeta^{1}_{t+1} + \zeta^{2}_{t+1}}{\zeta^{3}_{t+1}},
\end{equation}
where
\begin{equation} \label{eq:zeta:terms}
\begin{split}
	\zeta^{1}_{t+1} &\eqdef \pred{t+1} \left( \deriv \md{t+1;\parvec} |_{\parvec = \parvec_t} \right),  \\
	\zeta^{2}_{t+1} &\eqdef \filtderiv{t+1} \md{t + 1;\parvec_t}, \\
	\zeta^{3}_{t+1} &\eqdef \pred{t+1} \md{t + 1;\parvec_t},
\end{split}
\end{equation}
with 
$$
   \filtderiv{t+1} \eqdef \pred{t}\{ (T_t - \pred{t} T_t) \md{t;\parvec_t} \},
$$
and statistics $\{ T_t \}_{t \in \nset}$ being, following \eqref{eq:tstat:update}, updated recursively according to  
\begin{equation} \label{eq:tstat:RML:update}
	T_{t + 1}(x_{t + 1}) = \int \{ T_t(x_t) + \addf{t;\parvec_{t + 1}} (x_{t:t + 1}) \} \, \bk{\post{t}}(x_{t + 1}, \rmd x_t), \quad x_{t + 1} \in \set{X},  
\end{equation}
{and initialised as $T_0 \equiv 0$. The statistic \eqref{eq:tstat:RML:update}} depends on the new observation $Y_{t + 1}$ through the term  $\addf{t;\parvec_{t + 1}}$. In addition, the measures $\{ \post{t} \}_{t \in \nset}$ and $\{ \pred{t} \}_{t \in \nset}$ are updated according to the recursion 
\begin{equation} \label{eq:mod:filter:rec}
\begin{split}
	\post{t} \testf&= \frac{\pred{t}(\md{t;\parvec_t} \testf)}{\pred{t} \md{t;\parvec_t}},\\
	\pred{t + 1} \testf &= \post{t} \hk_{\parvec_t} \testf, 
\end{split}
\quad f \in \bmf{\alg{X}}, 
\end{equation}
with, by convention, $\pred{0} \eqdef \Xinit$. {Note that the recursion  \eqref{eq:mod:filter:rec} updates the filter and predictor distributions on the basis of the current parameter fit. This means that for all $t \in \nset$, the measures $\phi_t$, $\pi_{t + 1}$ and $\eta_{t + 1}$ depend on \emph{all} the past parameter values $\{ \parvec_s \}_{s = 0}^t$.} 
  
This approach yields an online algorithm where a new observation can be incorporated into the estimator without the need of having to recalculate the latter from scratch. In the case where $\set{X}$ is finite, this algorithm was studied in~\cite{legland:mevel:1997}, which also provides assumptions under which the sequence $\{\parvec_{t}\}_{t \in \nsetpos}$ converges towards the parameter $\parvec_\ast$ \cite[see also][for refinements]{Tadic:2010}.

In the general case, the recursions \eqref{eq:tstat:RML:update} and \eqref{eq:mod:filter:rec} are intractable. The translation of the same into the language of particles is however immediate by approximating \eqref{eq:mod:filter:rec} by a particle filter, updated according to time-varying parameters, and  \eqref{eq:tstat:RML:update} by means of a PaRISian updating step \eqref{eq:update:paris}, the latter executed for some suitable precision parameter $\K \geq 2$.   

The algorithm is initialised by some arbitrary parameter guess $\parvec_0$. After this, the parameter fit is updated recursively as each new observation $y_{t + 1}$ is received, by first calculating a particle approximation $\hat{\zeta}_{t + 1}$ of $\zeta_{t + 1}$ and then updating the parameter according to
\begin{equation}
	\parvec_{t+1} = \parvec_t + \gamma_{t+1} \hat{\zeta}_{t+1},  
\end{equation}
where $\{ \gamma_t \}_{t \in \nsetpos}$ satisfies the standard stochastic optimisation requirements (i.e. infinite sum and finite sum of squares) above. The algorithm, which is illustrated numerically in the next section, is detailed in pseudo code below. \smallskip

\begin{algorithmic}[1]
\State set arbitrarily $\parvec_0$
\State draw $\epart{0}{i} \sim \Xinit$
\State set $\tstat[i]{0} \gets 0$
\For{$t \gets 0, 1, 2, \ldots$}
\State set $\wgt{t}{i} \gets \md{\parvec_t}(\epart{t}{i}, y_t)$
\State draw $I_i \sim \probdist(\{\wgt{t}{\ell}\}_{\ell=1}^{\N})$
\State draw $\epart{t + 1}{i} \sim \hd[{\parvec_t}](\epart{t}{I_i}, \cdot)$
\For{$\k \gets 1, \ldots, \K$}
\State draw $\bi{t + 1}{i}{\k} \sim \probdist( \{ \wgt{t}{\ell} \hd[{\parvec_t}](\epart{t}{\ell}, \epart{t + 1}{i} ) \}_{\ell = 1}^\N)$
\EndFor
\State set 
$
\tstat[i]{t + 1} \gets \K^{-1}\sum_{\k = 1}^{\K} \left( \tstat[\bi{t + 1}{i}{\k}]{t} + \addf{t;\parvec_t}(\epart{t}{\bi{t + 1}{i}{\k}}, \epart{t + 1}{i}) \right)
$
\State set $\bar{\tau}_{t + 1} \gets \N^{-1} \sum_{\ell = 1}^{\N} \tstat[\ell]{t + 1}$
\State set	$\rmlterm{t + 1}{1} \gets \N^{-1} \sum_{\ell = 1}^{\N} \nabla \md{\parvec_t}(\epart{t + 1}{\ell}, y_{t + 1})$
\State set $\rmlterm{t + 1}{2} \gets \N^{-1} \sum_{\ell = 1}^{\N} \left( \tstat[\ell]{t + 1} - \bar{\tau}_{t + 1} \right) \md{\parvec_t}(\epart{t + 1}{\ell}, y_{t + 1})$
\State set $\rmlterm{t + 1}{3} \gets \N^{-1} \sum_{\ell = 1}^{\N} \md{\parvec_t}(\epart{t + 1}{\ell}, y_{t + 1})$
\State set $\parvec_{t + 1} \gets \parvec_t + \gamma_{t + 1} \dfrac{\rmlterm{t + 1}{1} + \rmlterm{t + 1}{2}}{\rmlterm{t + 1}{3}}$
\EndFor
\end{algorithmic}

\subsection{Simulations} 
\label{sec:simulations}

We benchmark our algorithm on two different models, namely
\begin{itemize}
	\item the stochastic volatility model of~\cite{hull:white:1987}, where we compare our estimator to that of \cite{delmoral:doucet:singh:2015}, and
	\item the simultaneous localisation and mapping (SLAM) model where we apply our algorithm as a proof of concept.
\end{itemize}

\subsubsection{Stochastic volatility model}
 
Consider the stochastic volatility model
\[
    \begin{split}
        X_{t + 1} &= \phi X_t + \sigma V_{t + 1}, \\
        Y_t &= \beta \exp(X_t / 2) U_t,
    \end{split}
    \quad t \in \nset \eqsp,
\]
proposed by \cite{hull:white:1987}, where $\{V_t \}_{t \in \nset^*}$ and $\{U_t\}_{t \in \nset}$ are independent sequences of mutually independent standard Gaussian noise variables. In this model the parameters to be estimated are $\parvec = (\phi, \sigma^2, \beta^2)$. We compared our PaRISian~RML estimator to the estimator proposed by~\cite{delmoral:doucet:singh:2015}{---the latter being referred to as the ``Rao-Blackwellised estimator''.} Since there is a significant difference in computational complexity between the algorithms, we design the particle sample size in each algorithm in such a way that the running times of the two algorithms are close to identical. With our implementation, using $\N = 100$ in the algorithm of \cite{delmoral:doucet:singh:2015} yields close to the same CPU time as using $(\N, \K) = (1400, 2)$ in the PaRISian RML estimator. For both algorithms we let $\{\gamma_t \}_{t \in \nset} = \{t^{-0.6}\}_{t \in \nset}$ . The algorithms were executed on an observation data record comprising $500,\!000$ observations generated under the parameters $\parvec^{*} = (0.8, 0.1, 1)$. The algorithms were run $12$ times each on the same data set with the same randomised starting parameters. \autoref{fig:trajectories} displays the resulting learning trajectories. It can be seen clearly that both methods converge to the true parameter values; however, the PaRISian RML estimates exhibit significantly less variance than those produced by the Rao-Blackwellised estimator (using a considerably smaller particle sample size). Indeed, computing the sample variances over the final parameter estimates indicates an improvement by almost an order of magnitude to the benefit of the PaRISian version for the same computational time: $(.069, .181, .095) \times 10^{-4}$ and $(.054, .164, .063) \times 10^{-3}$ for the PaRISian and Rao-Blackwellised versions, respectively. It can also be seen in~\autoref{fig:trajectories} that the trajectories exhibit some jumps, mainly in the $\beta^2$ variable. This occurs when the estimate of $\zeta_{t+1}^3$ is close to zero, which corresponds to time steps when the particles fail to cover the support of the emission density. Since the computationally more efficient PaRISian version allows considerably more particles to be used for the given budget, the problem is much smaller for this algorithm; indeed, whereas the learning curves of the Rao-Blackwellised version exhibit a high degree of volatility and several large jumps (the peaks are cut for visibility) with subsequent very long excursions out in the parameter space, the PaRISian version exhibits only one significant jump of moderate size. 

\begin{figure}[htb]
	\begin{minipage}[b]{0.49\linewidth}
		\centering
		\centerline{\includegraphics[width=8cm]{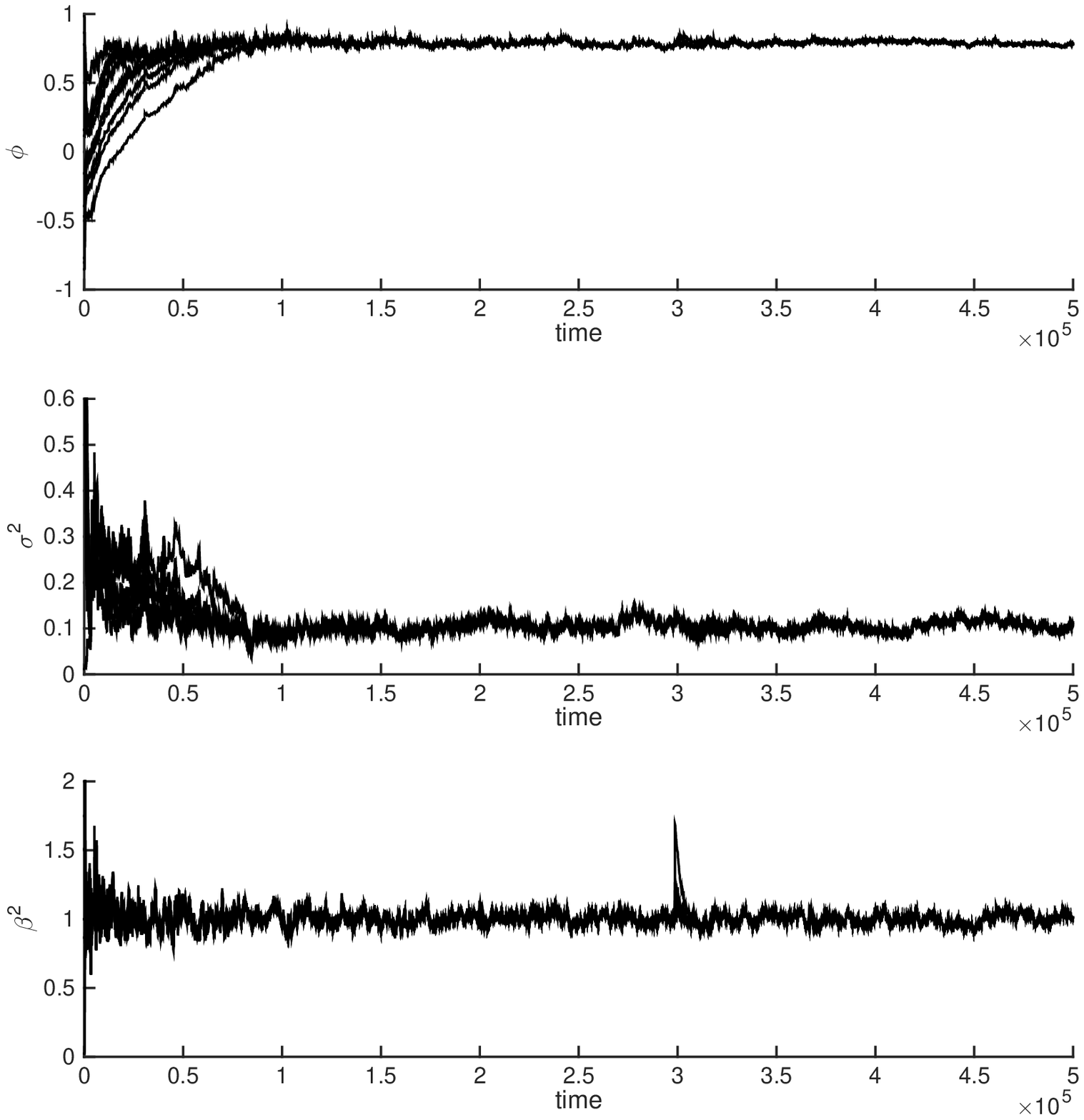}}
		\centerline{(a) PaRIS-based RML}
	\end{minipage}
	\begin{minipage}[b]{0.49\linewidth}
		\centering
		\centerline{\includegraphics[width=8cm]{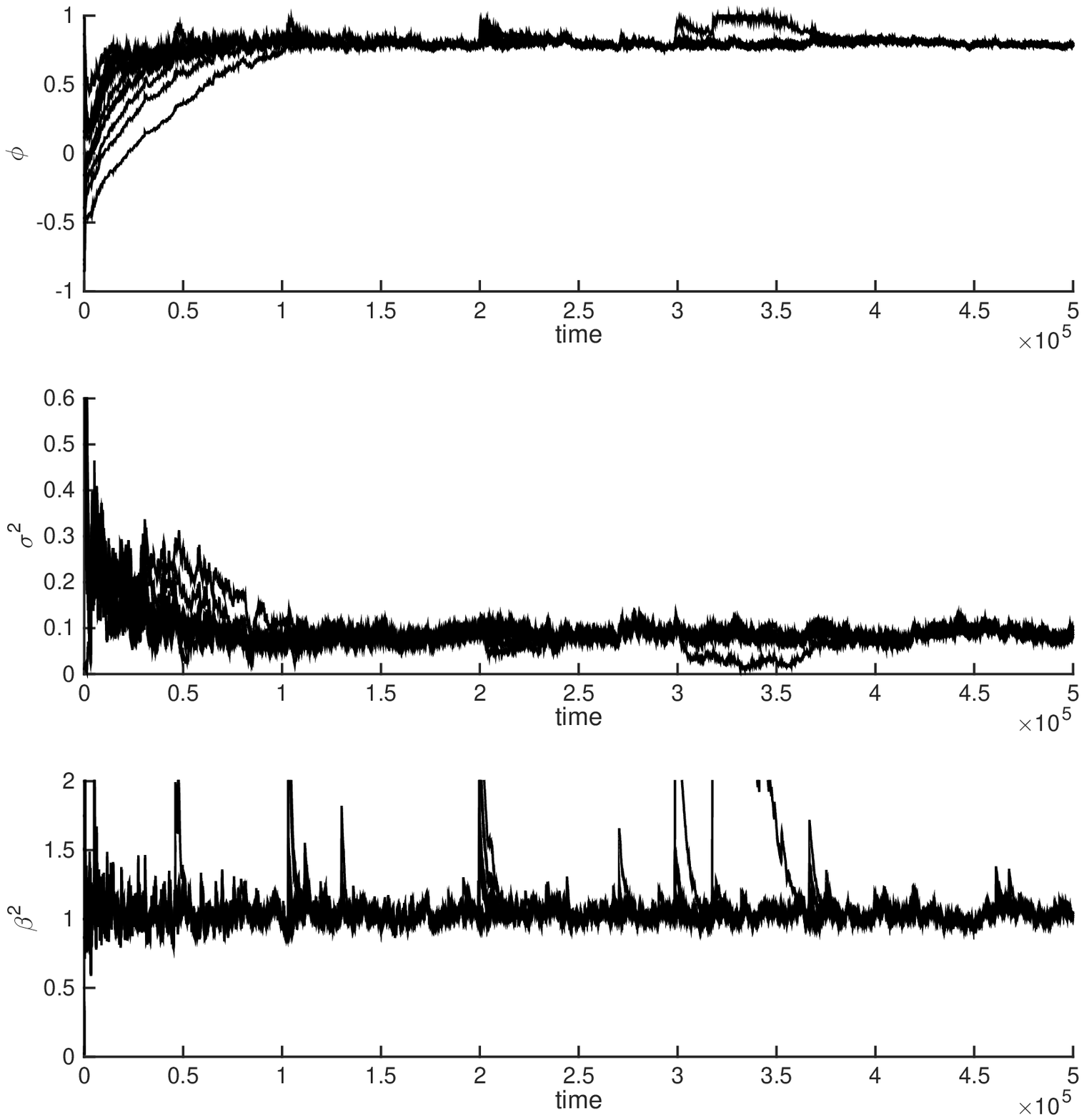}}
		\centerline{(b) Rao-Blackwellised RML}
	\end{minipage}
	\caption{Particle learning trajectories produced by the PaRISian RML estimator (left panel) and the Rao-Blackwellised RML estimator \johan{(right panel)} proposed in \cite{delmoral:doucet:singh:2015} for, from top to bottom, $\phi$, $\sigma^2$, and $\beta^2$. The former and latter algorithms used $\N = 1,\!400$ and $\N = 400$ particles, respectively (leading to comparable CPU times). For each algorithm, $12$ replicates were generated on the same data set with different, randomised initial parameters (being the same for both algorithms). For the Rao-Blackwellised version, the plot of $\beta^2$ does not contain the full trajectories due to very high peaks.}
	\label{fig:trajectories}
\end{figure}

\subsubsection{Simultaneous localisation and mapping (SLAM)}

The simultaneous localisation and mapping (SLAM) problem is fundamental in robotics. In the version considered here we let the state space consist of the coordinates and the bearing of a robot moving in the plane, i.e. $\set{X} = \mathbb{R}^2 \times (-\pi, \pi]$. The prior motion of the robot is modeled as a Markov chain $X_t \eqdef (X^1_t, X^2_t, X^3_t)$, $t \in \nset$, on this space, defined through the state equations
\begin{align}
	X^1_{t+1} & = X^1_{t} + d_{t+1} \cos(X^3_{t}) + \epsilon^1_{t+1}, \\
	X^2_{t+1} & = X^2_{t} + d_{t+1} \sin(X^3_{t}) + \epsilon^2_{t+1}, \\
	X^3_{t+1} & = X^3_{t} + \alpha_{t+1} + \epsilon^3_{t+1},
\end{align}
where $\{\epsilon^i_{t}\}_{t \in \nsetpos}$, $i \in \{1, 2, 3\}$, are independent sequences of mutually independent noise variables with known variances $\sigma^2_{i}$, $i \in \{1, 2, 3\}$. The sequence $\{ ( d_t, \alpha_t ) \}_{t \in \nsetpos}$ provides the commands---in terms of speed and bearing changes---of the robot at each time point.

The robot observes a landmark, defined by its positions in the plane, by measuring the distance and the relative bearing to the same. Assuming $L$ landmarks, the observations at time $t$ are given by $Y_t = \{Y_t^i\}_{i \in O_t}$, where $O_t \subseteq \{1, \ldots, L\}$ is the set of observed landmarks. For each observed landmark,  
$$
	Y^i_{t} = \begin{pmatrix}
		\| (\theta^i_1, \theta^i_2) - (X^1_{t}, X_t^2) \| + \varepsilon_t^{i,1}\\
		\displaystyle \arctan \left( \frac{\theta^i_2 - X^2_{t}}{\theta^i_1 - X^1_{t}} \right) - X^3_{t} + \varepsilon_t^{i,2}
	\end{pmatrix},
$$
where $\| \cdot \|$ denotes the Euclidean norm, $(\theta_1^i, \theta_2^i)$ is the location of landmark $i$, $(\varepsilon_t^{i,1}, \varepsilon_t^{i,2})$ are independent noise variables with known variances $(\varsigma^2_1, \varsigma^2_2)$, the noises of different time steps and different landmarks being independent.

In this setting we wish to estimate the locations of all the landmarks, which implies $2 L$ unknown parameters. Note that the noise parameters in the model are assumed to be calibrated beforehand. Several existing works apply particle methods to the SLAM problem; see, e.g., the \emph{FastSLAM}~\citep{montemerlo:2002} and \emph{FastSLAM 2.0}~\citep{Montemerlo:Thrun:Koller:Wegbreit:2003} algorithms. More recently an online EM version was proposed by \cite{LeCorff:Fort:Moulines:2011} and the \emph{Marginal-SLAM algorithm} \citep{Cantin:Freitas:Castellanos:2007} is based on an updating step similar to~\eqref{eq:par:update}.

Using the model above, data was simulated for $L = 9$ landmarks located according to the green dots in~\autoref{fig:SLAM}. The observations were generated by simulating $100,\!000$ observations of the robot moving around a closed loop with a time resolution of $.1$ s. The noise parameters were set to $\sigma_1 = \sigma_2 = .25$, $\sigma_3 = \frac{3 \pi}{180}$, $\varsigma_1 = .25$, and $\varsigma_2 = \frac{\pi}{180}$. The robot observes landmarks within a $30$ m radius and a $180^{\circ}$ field of vision.

Using this simulated data we performed $20$ independent runs of our PaRISian RML algorithm in order to estimate the landmark positions. For each landmark, the initial estimate of its position was drawn randomly according to a bivariate Gaussian distribution with the true landmark position as mean and the identity matrix as covariance matrix. The algorithm used $(\N, \K) = (500, 2)$ and the particles were initialised in $(0,0)$ with zero bearing, i.e., the same as the robot's starting position. Since the problem is invariant under translations and rotations we fix two landmarks and assume that these are known a priori. The results are displayed in~\autoref{fig:SLAM}, where the estimates cluster closely around the true parameter values. The figure also shows the estimated filtered trajectory of the robot during its first $3$ laps. In order to illustrate the convergence of the landmark position estimates, we consider the evolution of their MSE over time. The results are reported in~\autoref{fig:SLAM:mse}, where the average MSE of $4$ landmarks is computed based on the $20$ independent runs of the algorithm. As clear from the picture, the processing of the full observation record yields MSE values that fluctuate stably around zero. 

{Needless to say, even though this prototypical solution to the SLAM seems promising, a lot of additional work is needed for obtaining an algorithm running in real applications. Moreover, the comprehensive task of benchmarking our approach against full palette of existing algorithms treating this fundamental problem falls outside the scope of the present paper.}

\begin{figure}[htb]
	\includegraphics[width = 12 cm]{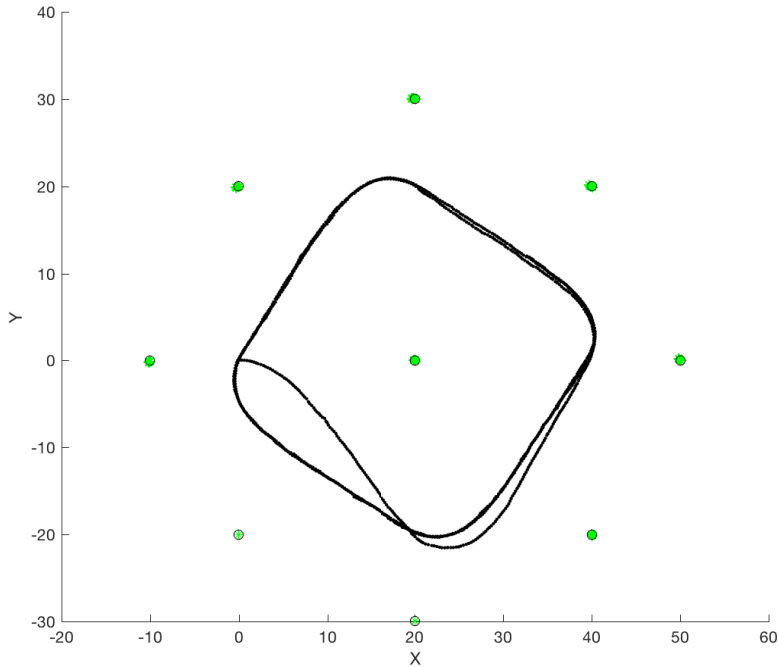}
	\caption{Resulting estimates of the $9$ landmarks for the SLAM problem using the PaRISian RML algorithm. The circles are the true positions of the landmarks and green stars are the resulting estimates. The black dots denote the particle estimates of the robot's positions during the first three laps. Rerunning the algorithm yields  similar trajectories.}
	\label{fig:SLAM}
\end{figure}

\begin{figure}[htb]
	\includegraphics[width = 12 cm]{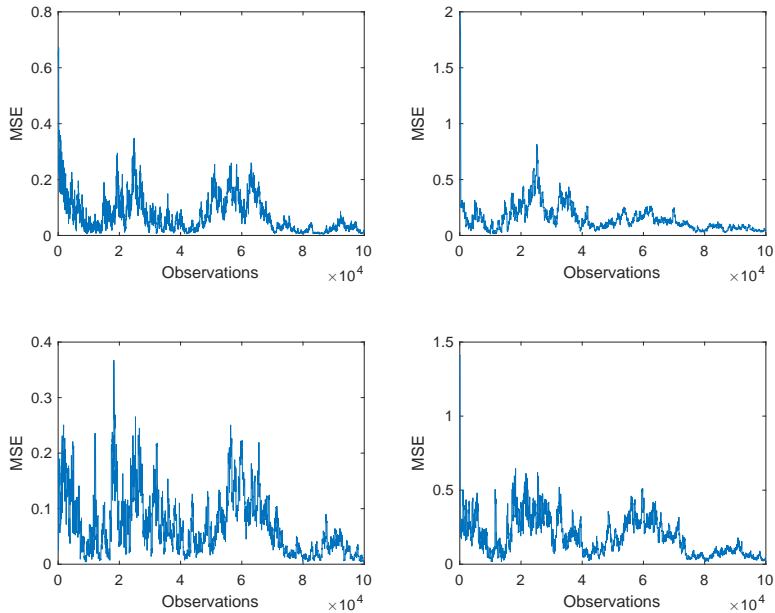}
	\caption{Resulting mean square error estimate for 4 of the landmarks based on the 20 independent runs of the PaRISian RML algorithm for the SLAM problem.}
	\label{fig:SLAM:mse}
\end{figure}



\section{Conclusion} 
\label{sec:discussion}

We have presented a novel algorithm for efficient particle-based estimation of tangent filters in general SSMs. The algorithm involves the PaRIS algorithm proposed by \cite{olsson:westerborn:2014b} as a key ingredient. The estimator is furnished with several convergence results, the main result being a CLT at the rate $\sqrt{\N}$. The convergence analysis is driven by results obtained by \cite{olsson:westerborn:2014b, delmoral:doucet:singh:2015}, and our time uniform bound on the asymptotic variance extends the existing results in \cite{delmoral:doucet:singh:2015}. Importantly, {under weak assumptions,} the computational complexity of our algorithm grows only linearly with the number of particles, whereas the complexity of existing methods such as the estimator of \cite{delmoral:doucet:singh:2015}, which may be viewed as a Rao-Blackwellisation of our estimator, is typically quadratic in the number of particles. Thus, we may expect the computational benefit of the ``de-Rao-Blackwellisation'' that characterises PaRIS to grossly exceed the price of the additional variance introduced by the PaRISian backward sampling procedure. 

In the second part of the paper we cast our PaRISian tangent filter estimator into the framework of RML, yielding a computationally efficient and easily implemented online parameter estimation procedure. As clear from the simulations, the fact that our online estimator allows, compared to existing methods, more particles to be used for a given computational budget is of importance for the stability of the online estimates. 

{Appealingly, the asymptotic variance of our estimator can be bounded uniformly in time; in other words, the estimator stays stochastically stable in the long term. Needless to say, the strong mixing assumptions driving the stability analysis are restrictive, and to relax these assumptions is a natural direction of research. \cite{jasra:2015} provides an $O(t)$ bound on the asymptotic variance of the \emph{forward-filtering backward-smoothing} (FFBSm) \emph{algorithm} \citep{doucet:godsill:andrieu:2000}, which is equivalent with the Rao-Blackwellised PaRIS in the case of additive state functionals (the online formulation was however found by \citet{delmoral:doucet:singh:2010}), under assumptions that point to applications on non-compact state spaces, and the same approach may be applicable to our PaRISian tangent filter estimator.}

{Finally, it still remains to prove, under general, verifiable assumptions, the convergence of the parameter estimates produced by our PaRISian RML algorithm. Such an analysis is expected to be technically very involved, and is hence beyond the scope of the present paper.}  In~\cite{legland:mevel:1997}, in which the convergence of exact RML in HMMs is established for the case of finite state space, the proof consists in showing that the extended Markov chain comprising the state and observation processes as well as the prediction and tangent filter flows is ergodic and applying standard stochastic approximation results~\cite{Deylon:1996}. A route to the convergence of our proposed PaRISian RML could be to include also the particle cloud and the particle-based statistics $\{ \tau_t^i \}_{i = 1}^\N$ into the Markov chain and study the ergodicity of the same. {An alternative and more direct approach would be to cast the problem into the framework of \emph{biased stochastic gradient search}, which was analysed recently by \cite{tadic:2017}.}

\bibliographystyle{abbrvnat}
\bibliography{biblio}

\begin{thebibliography}{39}
\providecommand{\natexlab}[1]{#1}
\providecommand{\url}[1]{\texttt{#1}}
\expandafter\ifx\csname urlstyle\endcsname\relax
  \providecommand{\doi}[1]{doi: #1}\else
  \providecommand{\doi}{doi: \begingroup \urlstyle{rm}\Url}\fi

\bibitem[Anderson and Moore(1979)]{anderson:moore:1979}
B.~D.~O. Anderson and J.~B. Moore.
\newblock \emph{Optimal Filtering}.
\newblock Prentice-Hall, New Jersey, 1979.

\bibitem[Capp\'{e}(2001)]{cappe:2001:hmmbib}
O.~Capp\'{e}.
\newblock Ten years of {HMM}s (online bibliography 1989--2000).
\newblock \url{http://perso.telecom-paristech.fr/~cappe/docs/hmmbib.html},
  2001.

\bibitem[Capp\'{e}(2011)]{cappe:2009}
O.~Capp\'{e}.
\newblock Online {EM} algorithm for hidden {M}arkov models.
\newblock \emph{Journal of Computational and Graphical Statistics}, 20\penalty0
  (3):\penalty0 728--749, 2011.

\bibitem[Capp\'{e} et~al.(2005)Capp\'{e}, Moulines, and
  Ryd\'{e}n]{cappe:moulines:ryden:2005}
O.~Capp\'{e}, E.~Moulines, and T.~Ryd\'{e}n.
\newblock \emph{Inference in Hidden {M}arkov Models}.
\newblock Springer, New York, 2005.

\bibitem[Crisan and Heine(2008)]{crisan:heine:2008}
D.~Crisan and K.~Heine.
\newblock Stability of the discrete time filter in terms of the tails of noise
  distributions.
\newblock \emph{Journal of the London Mathematical Society. Second Series},
  78\penalty0 (2):\penalty0 441--458, 2008.

\bibitem[{Del Moral} and Guionnet(2001)]{delmoral:guionnet:2001}
P.~{Del Moral} and A.~Guionnet.
\newblock On the stability of interacting processes with applications to
  filtering and genetic algorithms.
\newblock \emph{Annales de l'Institut Henri Poincar\'e}, 37:\penalty0 155--194,
  2001.

\bibitem[{D}el {M}oral et~al.(2010){D}el {M}oral, {D}oucet, and
  {S}ingh]{delmoral:doucet:singh:2010}
P.~{D}el {M}oral, A.~{D}oucet, and S.~{S}ingh.
\newblock {A} {B}ackward {P}article {I}nterpretation of {F}eynman-{K}ac
  {F}ormulae.
\newblock \emph{ESAIM: Mathematical Modelling and Numerical Analysis},
  44\penalty0 (5):\penalty0 947--975, 2010.

\bibitem[Del~Moral et~al.(2015)Del~Moral, Doucet, and
  Singh]{delmoral:doucet:singh:2015}
P.~Del~Moral, A.~Doucet, and S.~S. Singh.
\newblock Uniform stability of a particle approximation of the optimal filter
  derivative.
\newblock \emph{SIAM Journal on Control and Optimization}, 53\penalty0
  (3):\penalty0 1278--1304, 2015.

\bibitem[Delyon(1996)]{Deylon:1996}
B.~Delyon.
\newblock General results on the convergence of stochastic algorithms.
\newblock \emph{IEEE Transactions on Automatic Control}, 41\penalty0
  (9):\penalty0 1245--1255, 1996.

\bibitem[Dempster et~al.(1977)Dempster, Laird, and
  Rubin]{dempster:laird:rubin:1977}
A.~P. Dempster, N.~M. Laird, and D.~B. Rubin.
\newblock Maximum likelihood from incomplete data via the {EM} algorithm.
\newblock \emph{Journal of the Royal Statistical Society}, 39\penalty0
  (1):\penalty0 1--38 (with discussion), 1977.

\bibitem[Douc and Matias(2001)]{douc:matias:2002}
R.~Douc and C.~Matias.
\newblock Asymptotics of the maximum likelihood estimator for general hidden
  {M}arkov models.
\newblock \emph{Bernoulli}, 7\penalty0 (3):\penalty0 381--420, 2001.

\bibitem[Douc et~al.(2011)Douc, Garivier, Moulines, and
  Olsson]{douc:garivier:moulines:olsson:2010}
R.~Douc, A.~Garivier, E.~Moulines, and J.~Olsson.
\newblock Sequential {M}onte {C}arlo smoothing for general state space hidden
  {M}arkov models.
\newblock \emph{Annals of Applied Probability}, 21\penalty0 (6):\penalty0
  2109--2145, 2011.

\bibitem[Douc et~al.(2014)Douc, Moulines, and
  Olsson]{douc:moulines:olsson:2014}
R.~Douc, E.~Moulines, and J.~Olsson.
\newblock Long-term stability of sequential {M}onte {C}arlo methods under
  verifiable conditions.
\newblock \emph{Annals of Applied Probability}, 24\penalty0 (5):\penalty0
  1767--1802, 2014.

\bibitem[Doucet and Tadi{\'c}(2003)]{doucet:tadic:2003}
A.~Doucet and V.~B. Tadi{\'c}.
\newblock Parameter estimation in general state-space models using particle
  methods.
\newblock \emph{Annals of the Institute of Statistical Mathematics},
  55\penalty0 (2):\penalty0 409--422, 2003.

\bibitem[Doucet et~al.(2000)Doucet, Godsill, and
  Andrieu]{doucet:godsill:andrieu:2000}
A.~Doucet, S.~Godsill, and C.~Andrieu.
\newblock On sequential {M}onte-{C}arlo sampling methods for {B}ayesian
  filtering.
\newblock \emph{Statistics and Computing}, 10:\penalty0 197--208, 2000.

\bibitem[Doucet et~al.(2001)Doucet, {De Freitas}, and
  Gordon]{doucet:defreitas:gordon:2001}
A.~Doucet, N.~{De Freitas}, and N.~Gordon, editors.
\newblock \emph{Sequential {M}onte {C}arlo Methods in Practice}.
\newblock Springer, New York, 2001.

\bibitem[Fearnhead et~al.(2010)Fearnhead, Wyncoll, and
  Tawn]{fearnhead:wyncoll:tawn:2010}
P.~Fearnhead, D.~Wyncoll, and J.~Tawn.
\newblock A sequential smoothing algorithm with linear computational cost.
\newblock \emph{{B}iometrika}, 97\penalty0 (2):\penalty0 447--464, 2010.

\bibitem[Gordon et~al.(1993)Gordon, Salmond, and
  Smith]{gordon:salmond:smith:1993}
N.~Gordon, D.~Salmond, and A.~F. Smith.
\newblock Novel approach to nonlinear/non-{G}aussian {B}ayesian state
  estimation.
\newblock \emph{IEE Proceedings F, Radar and Signal Processing}, 140:\penalty0
  107--113, 1993.

\bibitem[Hull and White(1987)]{hull:white:1987}
J.~Hull and A.~White.
\newblock The pricing of options on assets with stochastic volatilities.
\newblock \emph{The Journal of Finance}, 42:\penalty0 281--300, 1987.

\bibitem[Jacob et~al.(2013)Jacob, Murray, and
  Rubenthaler]{jacob:murray:rubenthaler:2013}
P.~E. Jacob, L.~M. Murray, and S.~Rubenthaler.
\newblock Path storage in the particle filter.
\newblock \emph{Statistics and Computing}, pages 1--10, 2013.

\bibitem[Jasra(2015)]{jasra:2015}
A.~Jasra.
\newblock On the behaviour of the backward interpretation of {F}eynman-{K}ac
  formulae under verifiable conditions.
\newblock \emph{Journal of Applied Probability}, 52\penalty0 (2):\penalty0
  339--359, 2015.

\bibitem[Julier and Uhlmann(1997)]{julier:uhlmann:1997}
S.~J. Julier and J.~K. Uhlmann.
\newblock A new extension of the {K}alman filter to nonlinear systems.
\newblock In \emph{AeroSense: The 11th International Symposium on
  Aerospace/Defense Sensing, Simulation and Controls}, 1997.

\bibitem[Kantas et~al.(2015)Kantas, Doucet, Singh, Maciejowski, and
  Chopin]{kantas:doucet:singh:maciejowski:chopin:2015}
N.~Kantas, A.~Doucet, S.~S. Singh, J.~Maciejowski, and N.~Chopin.
\newblock On particle methods for parameter estimation in state-space models.
\newblock \emph{Statistical Science}, 30\penalty0 (3):\penalty0 328--351, 2015.

\bibitem[Kitagawa(1996)]{kitagawa:1996}
G.~Kitagawa.
\newblock {M}onte-{C}arlo filter and smoother for non-{G}aussian nonlinear
  state space models.
\newblock \emph{Journal of Computational and Graphical Statistics}, 1:\penalty0
  1--25, 1996.

\bibitem[Kitagawa and Sato(2001)]{kitagawa:sato:2001}
G.~Kitagawa and S.~Sato.
\newblock Monte {C}arlo smoothing and self-organising state-space model.
\newblock In \emph{Sequential {M}onte {C}arlo methods in practice}, pages
  177--195. Springer, New York, 2001.

\bibitem[{Le Corff} et~al.(2011){Le Corff}, Fort, and
  Moulines]{LeCorff:Fort:Moulines:2011}
S.~{Le Corff}, G.~Fort, and E.~Moulines.
\newblock Online expectation maximization algorithm to solve the {SLAM}
  problem.
\newblock In \emph{2011 IEEE Statistical Signal Processing Workshop (SSP)},
  pages 225--228, 2011.

\bibitem[Le~Gland and Mevel(1996)]{legland:mevel:1996}
F.~Le~Gland and L.~Mevel.
\newblock {Geometric Ergodicity in Hidden Markov Models}.
\newblock Research Report RR-2991, {INRIA}, 1996.

\bibitem[{Le~Gland} and Mevel(1997)]{legland:mevel:1997}
F.~{Le~Gland} and L.~Mevel.
\newblock Recursive estimation in hidden {M}arkov models.
\newblock In \emph{Priceesings of the 36th IEEE Conference on Decision and
  Control}, pages 3468--3473, 1997.

\bibitem[Martinez-Cantin et~al.(2007)Martinez-Cantin, de~Freitas, and
  Castellanos]{Cantin:Freitas:Castellanos:2007}
R.~Martinez-Cantin, N.~de~Freitas, and J.~A. Castellanos.
\newblock Analysis of particle methods for simultaneous robot localization and
  mapping and a new algorithm: Marginal-slam.
\newblock In \emph{Proceedings 2007 IEEE International Conference on Robotics
  and Automation}, pages 2415--2420, 2007.

\bibitem[Montemerlo et~al.(2002)Montemerlo, Thrun, Koller, and
  Wegbreit]{montemerlo:2002}
M.~Montemerlo, S.~Thrun, D.~Koller, and B.~Wegbreit.
\newblock {FastSLAM}: A factored solution to the simultaneous localization and
  mapping problem.
\newblock In \emph{Proceedings of the AAAI National Conference on Artificial
  Intelligence}, Edmonton, Canada, 2002. AAAI.

\bibitem[Montemerlo et~al.(2003)Montemerlo, Thrun, Koller, and
  Wegbreit]{Montemerlo:Thrun:Koller:Wegbreit:2003}
M.~Montemerlo, S.~Thrun, D.~Koller, and B.~Wegbreit.
\newblock {FastSLAM} 2.0: An improved particle filtering algorithm for
  simultaneous localization and mapping that provably converges.
\newblock In \emph{Proceedings of the Sixteenth International Joint Conference
  on Artificial Intelligence (IJCAI)}, Acapulco, Mexico, 2003. IJCAI.

\bibitem[Nguyen et~al.(2017)Nguyen, Le~Corff, and
  Moulines]{nguyen:lecorff:moulines:2017}
T.~N.~M. Nguyen, S.~Le~Corff, and E.~Moulines.
\newblock On the two-filter approximations of marginal smoothing distributions
  in general state-space models.
\newblock \emph{Advances in Applied Probability}, 50\penalty0 (1):\penalty0
  154--177, 2017.

\bibitem[Olsson and Westerborn(2016)]{olsson:westerborn:2016}
J.~Olsson and J.~Westerborn.
\newblock Efficient parameter inference in general hidden markov models using
  the filter derivatives.
\newblock In \emph{2016 IEEE International Conference on Acoustics, Speech and
  Signal Processing (ICASSP)}, pages 3984--3988, 2016.

\bibitem[Olsson and Westerborn(2017)]{olsson:westerborn:2014b}
J.~Olsson and J.~Westerborn.
\newblock Efficient particle-based online smoothing in general hidden {M}arkov
  models: The {P}a{RIS} algorithm.
\newblock \emph{Bernoulli}, 23\penalty0 (3):\penalty0 1951--1996, 2017.

\bibitem[Olsson et~al.(2008)Olsson, Capp\'e, Douc, and
  Moulines]{olsson:cappe:douc:moulines:2006}
J.~Olsson, O.~Capp\'e, R.~Douc, and E.~Moulines.
\newblock Sequential {M}onte {C}arlo smoothing with application to parameter
  estimation in non-linear state space models.
\newblock \emph{Bernoulli}, 14\penalty0 (1):\penalty0 155--179, 2008.

\bibitem[Poyiadjis et~al.(2005)Poyiadjis, Doucet, and
  Singh]{poyiadjis:doucet:singh:2005}
G.~Poyiadjis, A.~Doucet, and S.~S. Singh.
\newblock Particle methods for optimal filter derivative: application to
  parameter estimation.
\newblock In \emph{Proceedings IEEE International Conference on Acoustics,
  Speech, and Signal Processing}, pages 925--928, 2005.

\bibitem[Poyiadjis et~al.(2011)Poyiadjis, Doucet, and
  Singh]{poyiadjis:doucet:singh:2011}
G.~Poyiadjis, A.~Doucet, and S.~Singh.
\newblock Particle approximations of the score and observed information matrix
  in state space models with application to parameter estimation.
\newblock \emph{Biometrika}, 2011.

\bibitem[Tadic(2010)]{Tadic:2010}
V.~B. Tadic.
\newblock Analyticity, convergence, and convergence rate of recursive
  maximum-likelihood estimation in hidden markov models.
\newblock \emph{IEEE Transactions on Information Theory}, 56\penalty0
  (12):\penalty0 6406--6432, 2010.

\bibitem[Tadi\'c and Doucet(2017)]{tadic:2017}
V.~B. Tadi\'c and A.~Doucet.
\newblock Asymptotic bias of stochastic gradient search.
\newblock \emph{Annals of Applied Probability}, 27\penalty0 (6):\penalty0
  3255--3304, 2017.

\end{thebibliography}

\appendix
\section{Proofs}
\label{sec:proofs}

Define for all $t \in \nset$ and $\theta \in \Theta$, 
$$
	\lk{t;\parvec} : \set{X} \times \alg{X} \ni (x, A) \mapsto  \md{t;\parvec}(x) \hk_{\parvec}(x, A). 
$$
(Note that our definition of $\lk{t}$ differs from that used by~\cite{olsson:westerborn:2014b}, in which the order of $\md{t;\parvec}$ and $\hk_{\parvec}$ is swapped.) With this notation, by the filtering recursion \eqref{eq:pred:to:filt}--\eqref{eq:filt:to:pred},   
\begin{equation}
	\pred{t+1;\parvec} = \frac{\pred{t;\parvec} \lk{t;\parvec}}{\pred{t;\parvec} \lk{t;\parvec}\1{\set{X}}}, \label{eq:pred:rec}
\end{equation}
with, as previously, $\post{0 \mid -1} \eqdef \Xinit$. This condensed form of the filtering recursion will be used in \autoref{sec:proof:CLT}.  

In the coming analysis the following decomposition will be instrumental. For all $t \in \nset$, 
\begin{multline} \label{eq:key:decomposition}
	\filtderiv[part]{t;\parvec} \testf[t] - \filtderiv{t;\parvec} \testf[t] 
	=  \frac{1}{\N} \sum_{i = 1}^{\N} \{ (\tstat[i]{t}  - \pred{t;\parvec} \tstat{t;\parvec} \af{t;\parvec} ) (\testf[t](\epart{t}{i})  - \pred{t;\parvec} \testf[t]) \} 
	- \pred{t;\parvec} \{ (\tstat{t;\parvec} \af{t;\parvec} - \pred{t;\parvec} \tstat{t;\parvec} \af{t;\parvec}) (\testf[t] - \pred{t;\parvec} \testf[t]) \} \\
	- \left( \frac{1}{\N} \sum_{i = 1}^{\N} \tstat[i]{t} - \pred{t;\parvec} \tstat{t;\parvec} \af{t;\parvec} \right) \left( \frac{1}{\N} \sum_{i = 1}^{\N} \testf[t](\epart{t}{i}) - \pred{t;\parvec} \testf[t] \right). 
\end{multline}

\subsection{Proof of~\autoref{thm:hoeffding}}
We apply the decomposition \eqref{eq:key:decomposition}. Note that  
\begin{multline} \label{eq:second:key:identity}
\frac{1}{\N} \sum_{i = 1}^{\N} \{ (\tstat[i]{t}  - \pred{t;\parvec} \tstat{t;\parvec} \af{t;\parvec} ) (\testf[t](\epart{t}{i})  - \pred{t;\parvec} \testf[t]) \} - \pred{t;\parvec} \{ (\tstat{t;\parvec} \af{t;\parvec} - \pred{t;\parvec} \tstat{t;\parvec} \af{t;\parvec}) (\testf[t] - \pred{t;\parvec} \testf[t]) \} \\
=  \frac{1}{\N} \sum_{i = 1}^{\N} \wgt{t}{i} \{ \tstat[i]{t} \testfp[t;\parvec](\epart{t}{i}) + \testfh[t;\parvec](\epart{t}{i}) \} - \pred{t;\parvec} \{ \md{t;\parvec} (\tstat{t;\parvec} \af{t;\parvec} \testfp[t;\parvec] + \testfh[t;\parvec]) \},
\end{multline}
where
\begin{equation} \label{eq:def:mod:test}
\begin{split}
	\testfp[t;\parvec](x) &\eqdef \frac{\testf[t](x) - \pred{t;\parvec} \testf[t]}{\md{t;\parvec}(x)},\\
	\testfh[t;\parvec](x) &\eqdef - \frac{\pred{t;\parvec} \tstat{t;\parvec} \af{t;\parvec}\{ \testf[t](x) - \pred{t;\parvec} \testf[t] \}}{\md{t;\parvec}(x)},
\end{split} \qquad x \in \set{X}. 
\end{equation}
Now, since $\testfp[t] \md{t;\parvec}$ and $\testfh[t] \md{t;\parvec}$ both belong to $\bmf{\alg{X}}$, \autoref{lemma:hoeffding} provides constants $c_t > 0$ and $\tilde{c}_t > 0$ such that for all $\varepsilon > 0$, 
\begin{equation} \label{eq:hoeffding-1}
	\prob \left( \left| \frac{1}{\N} \sum_{i = 1}^{\N} \wgt{t}{i} \{ \tstat[i]{t} \testfp[t;\parvec](\epart{t}{i}) + \testfh[t;\parvec](\epart{t}{i}) \} -  \pred{t;\parvec} \{ \md{t;\parvec} (\tstat{t;\parvec} \af{t;\parvec} \testfp[t;\parvec] + \testfh[t;\parvec]) \} \right| \geq \varepsilon \right) \leq c_t \exp{(- \tilde{c}_t \N \varepsilon^2)}.
\end{equation}
To deal with the second part of the decomposition \eqref{eq:key:decomposition}, we use the same technique. First, by applying \autoref{lemma:hoeffding} with $f \equiv 1/\md{t;\parvec}$ and $\tilde{f} \equiv 0$, we obtain constants $a_t > 0$ and $\tilde{a}_t > 0$ such that for all $\varepsilon > 0$, 
\begin{equation} \label{eq:hoeffding-2}
\prob\left( \left| \frac{1}{\N} \sum_{i = 1}^{\N} \tstat[i]{t} - \pred{t;\parvec} \tstat{t;\parvec} \af{t;\parvec} \right| \geq \varepsilon \right) \leq a_t \exp{{(- \tilde{a}_t \N \varepsilon^2)}}.
\end{equation}
Similarly, using \autoref{lemma:hoeffding} with $\testf \equiv 0$ and $\tilde{f} \equiv f_t / \md{t;\parvec}$ provides constants $b_t > 0$ and $\tilde{b}_t > 0$ such that for all $\varepsilon > 0$, 
\begin{equation} \label{eq:hoeffding-3}
\prob \left( \left| \frac{1}{\N} \sum_{i=1}^{\N} \testf[t](\epart{t}{i}) - \pred{t;\parvec}\testf[t] \right| \geq \varepsilon \right) \leq b_t \exp{(- \tilde{b}_t \N \varepsilon^2)}. 
\end{equation}
Combining \eqref{eq:hoeffding-1}, \eqref{eq:hoeffding-2}, and \eqref{eq:hoeffding-3} yields, for all $\varepsilon > 0$,  
$$
\prob \left( \left| \filtderiv[part]{t;\parvec} \testf[t] - \filtderiv{t;\parvec} \testf[t] \right| \geq \varepsilon \right) \leq a_t \exp{(- \tilde{a}_t \N \varepsilon / 2 )} + b_t \exp{(- \tilde{b}_t \N \varepsilon / 2)} + c_t \exp \{- \tilde{c}_t \N (\varepsilon / 2)^2 \}, 
$$
from which the statement of the theorem follows. \qed

The following result is obtained by inspection of the proof of \citet[Theorem~1(i)]{olsson:westerborn:2014b}.  
\begin{proposition} \label{lemma:hoeffding}
Let \autoref{ass:bounded:q:g} hold. Then for all $t \in \nset$, all $\parvec \in \Theta$, all additive state functionals $\af{t} \in \bmf{\alg{X}^{\varotimes (t + 1)}}$, all measurable functions $f_t$Êand $\tilde{f}_t$ such that $f_t \md{t;\parvec} \in \bmf{\alg{X}}$ and $\tilde{f}_t \md{t;\parvec} \in \bmf{\alg{X}}$, and all $\K \in \nset$, there exist constants $c_t > 0$ and $\tilde{c}_t > 0$ (possibly depending on $\parvec$, $\af{t}$ $f_t$, $\tilde{f}_t$, and $\K$) such that for all $\varepsilon > 0$, 
$$
\prob \left( \left| \frac{1}{\N} \sum_{i = 1}^{\N} \wgt{t}{i} \{ \tstat[i]{t} \testf[t](\epart{t}{i}) + \tilde{f}_t(\epart{t}{i}) \} -  \pred{t; \parvec} \{ \md{t;\parvec} (\tstat{t;\parvec} \af{t} \testf[t] + \tilde{f}_t) \} \right| \geq \varepsilon \right) \\
		 \leq c_t \exp{(- \tilde{c}_t \N \varepsilon^2)},
$$
where $\{ (\epart{t}{i}, \wgt{t}{i}, \tstat[i]{t}) \}_{i = 1}^\N$ are produced using the PaRIS algorithm. 
\end{proposition}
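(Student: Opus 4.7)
The statement is a modest generalization of \citet[Theorem~1(i)]{olsson:westerborn:2014b}, which handles the special case $f_t \equiv 1$, $\tilde{f}_t \equiv 0$. The plan is to follow that paper's inductive argument on $t$, absorbing the two auxiliary test functions into the constants of the Hoeffding bound. The key observation is that $f_t$ and $\tilde{f}_t$ only enter in weighted form, namely as $\wgt{t}{i} f_t(\epart{t}{i}) = \md{t;\parvec}(\epart{t}{i}) f_t(\epart{t}{i})$ and analogously for $\tilde{f}_t$; by hypothesis both of these products are bounded, so the existing Hoeffding-type arguments for the PaRIS carry over with only the numerical values of the constants being affected.

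At the base $t=0$ we have $\tstat[i]{0}=0$ and the particles are i.i.d.\ from $\Xinit$, so the quantity reduces to a standard empirical average of the bounded summand $\md{0;\parvec}(\epart{0}{i}) \tilde{f}_0(\epart{0}{i})$ and the classical Hoeffding inequality gives the exponential bound. For the inductive step, condition on the particle filtration $\partfilt{t}$ and decompose the centred deviation at $t+1$ into three pieces, each controlled by a conditional Hoeffding-type inequality. The first is the fluctuation of the PaRIS statistics $\tstat[i]{t+1}$ around their conditional means $\tstattil{i}{t+1}$, which by construction coincide with the FFBSm-style update \eqref{eq:update:FFBSm}; given $\partfilt{t}$ and the new particles, this is an average of $\K$ bounded i.i.d.\ terms to which a conditional Hoeffding bound applies directly. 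The second is the Monte Carlo deviation of the empirical average $\N^{-1}\sum_i \md{t+1;\parvec}(\epart{t+1}{i})(\tstattil{i}{t+1} f_{t+1}(\epart{t+1}{i}) + \tilde{f}_{t+1}(\epart{t+1}{i}))$ from its conditional expectation given $\partfilt{t}$; a weighted Hoeffding bound applies here since the summand is bounded. The third is the residual deviation at time $t$, which, after rewriting $\tstattil{i}{t+1}$ via \eqref{eq:update:FFBSm} and integrating out the mutation through $\hk_\parvec$, reduces to a weighted empirical average at time $t$ against transformed test functions built from $\hk_\parvec(\md{t+1;\parvec} f_{t+1})$ together with analogous objects involving $\tilde{f}_{t+1}$ and $\addf{t;\parvec}$; this is exactly an instance of the inductive hypothesis at time $t$. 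A union bound over the three tails, with $\varepsilon$ redistributed by a constant factor, closes the induction.

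The main obstacle is the bookkeeping that keeps the class of admissible test functions closed under this recursion. One must verify that, after the reduction in the third step, the transformed functions at time $t$ remain bounded after multiplication by $\md{t;\parvec}$; this follows from $\hd[\parvec] \in \bmf{\alg{X}^{\varotimes 2}}$ and the positivity and boundedness of the observation densities in \autoref{ass:bounded:q:g}, together with the additive structure $\af{t+1}=\af{t}+\addf{t;\parvec}$, which guarantees that the PaRIS statistic is updated with a bounded increment at each step. Once this is in place, the constants $c_t$ and $\tilde{c}_t$ can be propagated finitely from one time step to the next, yielding the stated exponential bound for every fixed $t$.
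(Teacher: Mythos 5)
Your proposal follows essentially the same route as the paper, whose entire proof of this proposition is the remark that it is obtained by inspection of the inductive Hoeffding-type argument behind Theorem~1(i) of Olsson and Westerborn (2014), extended so that the bounded products $f_t \, g_{t;\theta}$ and $\tilde f_t \, g_{t;\theta}$ are carried through the particle recursion exactly as you describe. The one point your sketch glosses over is that the conditional expectation in your third term is normalised by $\Omega_t$ rather than $N$ (and the corresponding limit is a ratio with denominator $\pi_{t;\theta} g_{t;\theta} > 0$), so closing the induction also requires the standard ratio-to-difference step for the weight sum --- but this is precisely the machinery of the cited proof, so nothing essential is missing.
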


\subsection{Proof of \autoref{corollary:as:conv}}

The $\prob$-a.s. convergence of $\filtderiv[part]{t; \parvec} \testf[t]$ to $ \filtderiv{t; \parvec} \testf[t]$ is implied straightforwardly by the exponential convergence rate in \autoref{thm:hoeffding}. Indeed, note that 
$$
\prob \left( \lim_{\N \rightarrow \infty} \filtderiv[part]{t; \parvec} \testf[t] \neq \filtderiv{t; \parvec} \testf[t] \right) 
= \lim_{k \to \infty} \lim_{n \to \infty} \prob \left( \bigcup_{n \leq \N} \left\{ \left| \filtderiv[part]{t; \parvec} \testf[t] - \filtderiv{t; \parvec} \testf[t] \right| \geq \frac{1}{k} \right\} \right);  
$$
now, by \autoref{thm:hoeffding}, 
$$
\prob \left( \bigcup_{n \leq \N} \left\{ \left| \filtderiv[part]{t; \parvec} \testf[t] - \filtderiv{t; \parvec} \testf[t] \right| \geq \frac{1}{k} \right\} \right) \leq c_t \exp \left( - \tilde{c}_t n k^{-2} \right) \sum_{\N = 0}^\infty \exp \left( - \tilde{c}_t \N k^{-2} \right), 
$$
where the right hand side tends to zero when $n$ tends to infinity. This completes the proof.  \qed

\subsection{Proof of~\autoref{thm:CLT}} \label{sec:proof:CLT}
 
By combining \eqref{eq:key:decomposition} and \eqref{eq:second:key:identity},  
\begin{multline} 
	\sqrt{\N} \left( \filtderiv[part]{t;\parvec} \testf[t] - \filtderiv{t;\parvec} \testf[t] \right) =  \sqrt{\N} \left( \frac{1}{\N} \sum_{i = 1}^{\N} \{ \tstat[i]{t} \testfp[t;\parvec](\epart{t}{i}) + \testfh[t;\parvec](\epart{t}{i}) \} - \pred{t;\parvec}(\tstat{t;\parvec} \af{t;\parvec} \testfp[t;\parvec] + \testfh[t;\parvec]) \right) \\ 
	- \sqrt{\N} \left( \frac{1}{\N} \sum_{i = 1}^{\N} \testf[t](\epart{t}{i}) - \pred{t;\parvec} \testf[t] \right) 
	\left( \frac{1}{\N} \sum_{i = 1}^{\N} \tstat[i]{t} - \pred{t;\parvec} \tstat{t;\parvec} \af{t;\parvec} \right),
\end{multline}
where in this case
\[
	\begin{split}
		\testfp[t;\parvec](x) &\eqdef \testf[t](x) - \pred{t;\parvec} \testf[t],\\
		\testfh[t;\parvec](x) &\eqdef - \pred{t;\parvec} \tstat{t;\parvec} \af{t;\parvec} \{ \testf[t](x) - \pred{t;\parvec} \testf[t] \},
	\end{split} \qquad x \in \set{X}. 
	\]
are defined in \eqref{eq:def:mod:test}. By \autoref{lemma:CLT}, since $\testfp[t;\parvec] \in \bmf{\set{X}}$ and $\testfh[t;\parvec] \in \bmf{\set{X}}$, 
$$
\sqrt{\N} \left( \frac{1}{\N} \sum_{i = 1}^{\N} \{ \tstat[i]{t} \testfp[t;\parvec](\epart{t}{i}) + \testfh[t;\parvec](\epart{t}{i}) \} - \pred{t;\parvec}(\tstat{t;\parvec} \af{t;\parvec} \testfp[t;\parvec] + \testfh[t;\parvec]) \right) \convd \sigma_{t;\parvec} \langle \testfp[t;\parvec], \testfh[t;\parvec] \rangle(\af{t}) Z, 
$$
where $Z$ is standard normally distributed and 
$$
 \sigma_{t;\parvec} \langle \testfp[t;\parvec], \testfh[t;\parvec] \rangle(\af{t;\parvec}) = \sigma_{t;\parvec}(\af{t;\parvec}), 
$$
with $\sigma_{t;\parvec}(\af{t;\parvec})$ being defined in \eqref{eq:asvar:short}. Now, \autoref{lemma:hoeffding} and \autoref{lemma:CLT} yield
$$
\frac{1}{\N} \sum_{i = 1}^{\N} \tstat[i]{t} \convp \pred{t;\parvec} \tstat{t;\parvec} \af{t;\parvec}, 
$$
and
$$
\sqrt{\N} \left( \frac{1}{\N} \sum_{i = 1}^{\N} \testf[t](\epart{t}{i}) - \pred{t;\parvec} \testf[t] \right) \convd \sigma_{t;\parvec}^2 \langle 0, \testf[t] \rangle(\af{t;\parvec}) Z
$$
(with $0$ denoting the zero function), respectively, implying, by Slutsky's theorem, 
$$
\sqrt{\N} \left( \frac{1}{\N} \sum_{i = 1}^{\N} \testf[t](\epart{t}{i}) - \pred{t;\parvec} \testf[t] \right) 
	\left( \frac{1}{\N} \sum_{i = 1}^{\N} \tstat[i]{t} - \pred{t;\parvec} \tstat{t;\parvec} \af{t;\parvec} \right) \convp 0. 
$$
Finally, we complete the proof by noting that the term $\tilde{\sigma}_{t;\parvec}^2(\testf[t])$ in \eqref{eq:asvar:short} coincides with the asymptotic variance provided by~\citet[Theorem~3.2]{delmoral:doucet:singh:2015}. \qed

\begin{proposition} \label{lemma:CLT}
\autoref{ass:bounded:q:g} hold. Then for all $t \in \nset$, all $\parvec \in \Theta$, all additive state functionals $\af{t} \in \bmf{\alg{X}^{\varotimes (t + 1)}}$, all measurable functions $f_t \in \bmf{\alg{X}}$ and $\tilde{f}_t \in \bmf{\alg{X}}$, and all $\K \in \nset$,  as $\N \to \infty$, 
	\begin{equation}
		\sqrt{N} \left( \frac{1}{\N} \sum_{i=1}^{\N} \{ \tstat[i]{t} \testf[t](\epart{t}{i}) + \tilde{f}_t(\epart{t}{i}) \} - \pred{t; \parvec} (\tstat{t;\parvec} \af{t} \testf[t] + \tilde{f}_t) \right) \convd{} \sigma_{t;\parvec} \langle f_t, \tilde{f}_t \rangle(\af{t}) Z,
	\end{equation}
	where $Z$ is a standard Gaussian random variable and 
	\begin{equation} \label{eq:variance:expression:lemma}
		\sigma_{t;\parvec}^2 \langle f_t, \tilde{f}_t \rangle(\af{t}) \eqdef 
		 \tilde{\sigma}_{t;\parvec}^2 \langle f_t, \tilde{f}_t \rangle(\af{t}) 
		 + \sum_{s = 0}^{t-1} \sum_{\ell = 0}^{s} \K^{\ell - (s+1)} \varsigma_{s, \ell, t; \parvec} \langle f_t \rangle(\af{t}), 
	\end{equation}
	with
	$$
	\sigma_{t;\parvec}^2 \langle f_t, \tilde{f}_t \rangle(\af{t}) \eqdef \sum_{s=0}^{t-1} \frac{\pred{s+1;\parvec}\BFcent[2]{s+1}{t;\parvec}(\af{t} f_t + \tilde{f}_t)}{(\pred{s+1;\parvec} \lk{s+1;\parvec} \cdots \lk{t-1;\parvec} \1{\set{X}})^2}
	$$
	and 
	$$
	\varsigma_{s, \ell, t; \parvec} \langle f_t \rangle(\af{t}) \eqdef \frac{\pred{\ell+1;\parvec} \{ \bk{\post{\ell;\parvec}}(\tstat{\ell;\parvec} \af{\ell} + \addf{\ell} - \tstat{\ell+1;\parvec} \af{\ell + 1})^2 \lk{\ell + 1;\parvec} \cdots \lk{s;\parvec}( \lk{s+1;\parvec} \cdots \lk{t-1;\parvec} f_t)^2 \} }{(\pred{\ell+1;\parvec} \lk{\ell + 1;\parvec} \cdots \lk{s;\parvec} \1{\set{X}})(\pred{s+1;\parvec} \lk{s+1;\parvec} \cdots \lk{t-1;\parvec} \1{\set{X}})^2}. 
	$$
\end{proposition}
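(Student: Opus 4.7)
The plan is to prove the proposition by induction on $t$, generalising the CLT established in \citet[Corollary~5]{olsson:westerborn:2014b} to the case where an auxiliary test function $\tilde{f}_t$ is added linearly to the PaRIS-weighted statistic $\tstat[i]{t} f_t(\epart{t}{i})$. The key structural observation is that the estimator depends linearly on the pair of test functions $(f_t, \tilde{f}_t)$, so the variance should decompose into the same Rao-Blackwellised and PaRIS-correction pieces as in the single-function case, with $f_t$ replaced everywhere by the additive combination $\af{t;\parvec} f_t + \tilde{f}_t$ inside the retro-prospective kernel.

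For the base case $t = 0$, we have $\tstat[i]{0} = 0$ for all $i$ and $\{\epart{0}{i}\}_{i=1}^{\N}$ is an i.i.d.\ sample from $\Xinit$, so the estimator reduces to $\N^{-1}\sum_i \tilde{f}_0(\epart{0}{i})$ and the classical CLT yields asymptotic variance $\Xinit\{(\tilde{f}_0 - \Xinit\tilde{f}_0)^2\}$, which agrees with the claimed expression \eqref{eq:variance:expression:lemma} at $t = 0$ (the double sum being empty). For the inductive step, assume the CLT holds at time $t$. Following the standard martingale-style decomposition for particle-filter CLTs, I would decompose the $\sqrt{\N}$-scaled centred error at time $t+1$ into a telescoping sum of contributions coming from the PaRISian backward sampling $\{\bi{t+1}{i}{\k}\}_{\k,i}$ that updates $\{\tstat[i]{t}\}_i$ to $\{\tstat[i]{t+1}\}_i$, the mutation draws $\epart{t+1}{i} \sim \hk_\parvec(\epart{t}{I^i_{t+1}}, \cdot)$, the selection step $I^i_{t+1} \sim \probdist(\{\wgt{t}{\ell}\}_\ell)$, and the inductively propagated error from time $t$. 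Conditionally on the natural filtration generated by the preceding substep, each of the first three pieces is a sum of row-wise conditionally independent, centred random variables of order $\N^{-1/2}$, so a triangular-array Lindeberg--Feller CLT together with Slutsky's theorem yields joint convergence, provided the appropriate conditional second moments converge in probability.

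The hard part will be the bookkeeping that identifies the limits of these conditional second moments with the pieces of \eqref{eq:variance:expression:lemma}. The PaRIS substep at time $t+1$ introduces a variance of order $\K^{-1}$ times the conditional variance of $\tstat[\bi{t+1}{i}{1}]{t} + \addf{t}(\epart{t}{\bi{t+1}{i}{1}}, \epart{t+1}{i})$ under the mixture $\probdist(\{\wgt{t}{\ell}\hd[\parvec](\epart{t}{\ell},\epart{t+1}{i})\}_\ell)$; using \autoref{lemma:hoeffding} to replace each $\tstat[j]{t}$ by $\tstat{t;\parvec}\af{t;\parvec}(\epart{t}{j})$ up to $o_\prob(1)$ errors, this conditional variance converges to $\bk{\post{t;\parvec}}(\tstat{t;\parvec}\af{t;\parvec} + \addf{t;\parvec} - \tstat{t+1;\parvec}\af{t+1;\parvec})^2(\epart{t+1}{i})$, which after being multiplied by $f_{t+1}^2(\epart{t+1}{i})$ and normalised through the mutation and selection steps yields exactly the new summand $(s,\ell) = (t,t)$ in the nested double sum. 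Each preexisting PaRIS contribution from an earlier time is propagated through one further kernel $\lk{s+1;\parvec}$ and acquires an additional factor of $\K^{-1}$, matching the exponent $\ell - (s+1)$ on $\K$ in \eqref{eq:variance:expression:lemma}.

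The Rao-Blackwellised term $\tilde{\sigma}_{t+1;\parvec}^2\langle f_{t+1},\tilde{f}_{t+1}\rangle$ is identified, by the same recursive argument as in \citet[Theorem~3.2]{delmoral:doucet:singh:2015}, with the ``$\K = \infty$'' limit of the variance, observing that $f_{t+1}$ and $\tilde{f}_{t+1}$ enter only through the additive combination $\af{t+1;\parvec}f_{t+1} + \tilde{f}_{t+1}$ inside $\BFcent{s+1}{t+1;\parvec}$. The main technical obstacle is to verify that the cross-terms between the PaRIS-dependent piece $\tstat[i]{t+1} f_{t+1}(\epart{t+1}{i})$ and the shift $\tilde{f}_{t+1}(\epart{t+1}{i})$ do not contribute asymptotic variance beyond this additive combination; this follows from the conditional independence across $i$ of the PaRIS backward draws and the symmetry of the mutation and selection steps in the particle indices, which force the relevant cross-covariance to vanish at rate $\N^{-1}$.
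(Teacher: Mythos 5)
Your plan is, in essence, to re-prove the PaRIS CLT of \citet[Theorem~3]{olsson:westerborn:2014b} from scratch (induction over time, triangular-array CLT over the selection, mutation and backward-sampling sub-steps) with the linear shift $\tilde f_t$ added. That is a genuinely different, and much heavier, route than the paper's: the paper first treats the centred case $\pred{t;\parvec}(\tstat{t;\parvec}\af{t}f_t+\tilde f_t)=0$, writes $\N^{-1}\sum_i\{\tstat[i]{t}f_t(\epart{t}{i})+\tilde f_t(\epart{t}{i})\}$ as the \emph{self-normalised} PaRIS estimator applied to $f_t/\md{t;\parvec}$ and $\tilde f_t/\md{t;\parvec}$ multiplied by $\wgtsum{t}/\N$, invokes \autoref{lemma:lemma:CLT} (extracted from the cited proof) together with $\wgtsum{t}/\N\convp\pred{t;\parvec}\md{t;\parvec}$ and Slutsky, rescales the variance via the iterated prediction recursion \eqref{eq:one:step:likelihood:alt}, and finally recentres $\tilde f_t$ to handle the general case. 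More importantly, your sketch has genuine gaps precisely where the content of the proposition lies, namely in the identification of \eqref{eq:variance:expression:lemma}: you defer it as ``bookkeeping'', and the mechanism you describe is not correct. Replacing each $\tstat[j]{t}$ by $\tstat{t;\parvec}\af{t;\parvec}(\epart{t}{j})$ ``up to $o_\prob(1)$ errors'' via \autoref{lemma:hoeffding} is not legitimate: that proposition controls particle \emph{averages}, whereas an individual statistic $\tstat[j]{t}$ retains a non-vanishing, $O(\K^{-1/2})$, dispersion around $\tstat{t;\parvec}\af{t;\parvec}(\epart{t}{j})$ however large $\N$ is. This persistent dispersion is exactly what generates the nested terms with $\ell<s$ and coefficients $\K^{\ell-(s+1)}$; with your replacement you would recover only the $\ell=s$ terms. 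Your complementary claim that each pre-existing PaRIS contribution ``acquires an additional factor of $\K^{-1}$'' when passing from $t$ to $t+1$ contradicts \eqref{eq:variance:expression:lemma}: the coefficient of an existing pair $(s,\ell)$ is unchanged (only the functional is pushed through one more kernel), while the new backward-sampling step spawns the whole new family $s=t$, $\ell\in\{0,\dots,t\}$, because its conditional variance involves the already-noisy $\tstat[j]{t}$, each level of nesting contributing one factor $\K^{-1}$.

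Two further points. First, the induction hypothesis you state (the scalar CLT at time $t$ for one pair of test functions) is too weak to drive the step to $t+1$; you must carry along convergence in probability of the weighted averages entering the conditional variances at time $t+1$ (this is what \autoref{lemma:hoeffding} supplies and what the cited proof interleaves with the CLT), and your base-case check is an assertion rather than a verification: read literally, both sums in \eqref{eq:variance:expression:lemma} are empty at $t=0$, so the claimed agreement with $\Xinit\{(\tilde f_0-\Xinit\tilde f_0)^2\}$ requires settling the boundary conventions, not just noting that the double sum vanishes. Second, the vanishing of the cross terms between $\tstat[i]{t}f_t(\epart{t}{i})$ and $\tilde f_t(\epart{t}{i})$ is not a ``symmetry'' effect: it holds because $\tilde f_t(\epart{t}{i})$ is measurable with respect to the $\sigma$-field conditionally on which the backward draws are centred, so the extra PaRIS randomness enters the variance only through $f_t$ --- which is why $\varsigma_{s,\ell,t;\parvec}$ involves $f_t$ alone while $\tilde f_t$ appears only inside the Rao-Blackwellised term through the combination $\af{t}f_t+\tilde f_t$.
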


\begin{proof}[of \autoref{lemma:CLT}]
Assume first that $\pred{t; \parvec} (\tstat{t;\parvec} \af{t} \testf[t] + \tilde{f}_t) = 0$. Then, by \autoref{lemma:lemma:CLT} and Slutsky's theorem, as $\wgtsum{t} / \N  \convp \pred{t;\parvec} \md{t;\parvec}$ by \autoref{lemma:hoeffding}, 
\begin{multline}
\sqrt{\N} \left( \frac{1}{\N} \sum_{i = 1}^{\N} \{ \tstat[i]{t} \testf[t](\epart{t}{i}) + \tilde{f}_t(\epart{t}{i}) \} \right) \\Ê= \sqrt{\N} \left( \sum_{i = 1}^{\N} \frac{\wgt{t}{i}}{\wgtsum{t}} \{ \tstat[i]{t} \testfp[t;\parvec](\epart{t}{i}) + \testfh[t;\parvec](\epart{t}{i}) \} \right) \frac{1}{\N} \wgtsum{t} \convd 
\pred{t;\parvec} \md{t;\parvec} \times \Gamma_{t;\parvec} \langle \testfp[t;\parvec], \testfh[t;\parvec] \rangle(\af{t}) Z, 
\end{multline}
where again $Z$ has standard Gaussian distribution, $\Gamma_{t;\parvec} \langle \testfp[t;\parvec], \testfh[t;\parvec] \rangle(\af{t})$ is given in \autoref{lemma:lemma:CLT}, and we have set $\testfp[t;\parvec] \eqdef f_t / \md{t;\parvec}$ and $\testfh[t;\parvec] \eqdef \tilde{f}_t / \md{t;\parvec}$ and used, first, that $\md{t:\parvec} \testfp[t;\parvec] = f_t \in \bmf{\alg{X}}$ and $\md{t:\parvec} \testfh[t;\parvec] = \tilde{f}_t \in \bmf{\alg{X}}$ and, second, that 
$$
\post{t;\parvec}(\tstat{t;\parvec} \af{t;\parvec} \testfp[t;\parvec] + \testfh[t;\parvec]) = \frac{\pred{t;\parvec} \{ \md{t;\parvec} (\tstat{t;\parvec} \af{t;\parvec} \testfp[t;\parvec] + \testfh[t;\parvec])\}}{\pred{t;\parvec} \md{t;\parvec}} = \frac{\pred{t; \parvec} (\tstat{t;\parvec} \af{t} \testf[t] + \tilde{f}_t)}{\pred{t;\parvec} \md{t;\parvec}} = 0. 
$$ 
Now, by iterating \eqref{eq:pred:rec} we conclude that for all $(s, t) \in \nset^2$,  
\begin{multline}
\pred{t;\parvec} = \frac{\pred{t - 1;\parvec} \lk{t - 1;\parvec}}{\pred{t - 1;\parvec} \lk{t - 1;\parvec} \1{\set{X}}} = \frac{\pred{t - 2;\parvec} \lk{t - 2;\parvec} \lk{t - 1;\parvec}}{(\pred{t - 2;\parvec} \lk{t - 2;\parvec} \1{\set{X}}) (\pred{t - 1;\parvec} \lk{t - 1;\parvec} \1{\set{X}})} = \frac{\pred{t - 2;\parvec} \lk{t - 2;\parvec} \lk{t - 1;\parvec}}{\pred{t - 2;\parvec} \lk{t - 2;\parvec} \lk{t - 1;\parvec} \1{\set{X}}} \\ 
= \frac{\pred{s + 1;\parvec} \lk{s + 1;\parvec} \ldots \lk{t - 1;\parvec}}{\pred{s + 1;\parvec} \lk{s + 1;\parvec} \ldots \lk{t - 1;\parvec} \1{\set{X}}} 
\end{multline}
and, consequently, 
\begin{equation} \label{eq:one:step:likelihood:alt}
\pred{t;\parvec} \md{t;\parvec} = \frac{\pred{s + 1;\parvec} \lk{s + 1;\parvec} \ldots \lk{t - 1;\parvec} \md{t;\parvec}}{\pred{s + 1;\parvec} \lk{s + 1;\parvec} \ldots \lk{t - 1;\parvec} \1{\set{X}}}. 
\end{equation}
Finally, by \eqref{eq:one:step:likelihood:alt} it holds that 
$$
\pred{t;\parvec} \md{t;\parvec} \times \Gamma_{t;\parvec} \langle \testfp[t;\parvec], \testfh[t;\parvec] \rangle(\af{t}) = \sigma_{t;\parvec} \langle f_t, \tilde{f}_t \rangle(\af{t}), 
$$
where $\Gamma^2_{t;\parvec} \langle f_t, \tilde{f}_t \rangle(\af{t})$ is defined in~\eqref{eq:variance:expression:lemma:gamma} and $\sigma^2_{t;\parvec} \langle f_t, \tilde{f}_t \rangle(\af{t})$ is defined in~\eqref{eq:variance:expression:lemma}. Finally, in the general case, the previous holds true when $\tilde{f}_t$ is replaced by $\tilde{f}_t - \pred{t; \parvec} (\tstat{t;\parvec} \af{t} \testf[t] + \tilde{f}_t)$, which completes the proof. \qed
\end{proof}

The following lemma is obtained by inspection of the proof of \citet[Theorem~3]{olsson:westerborn:2014b}. 

\begin{lemma} \label{lemma:lemma:CLT}
\autoref{ass:bounded:q:g} hold. Then for all $t \in \nset$, all $\parvec \in \Theta$, all additive state functionals $\af{t} \in \bmf{\alg{X}^{\varotimes (t + 1)}}$, all measurable functions $f_t$Êand $\tilde{f}_t$ such that $f_t \md{t;\parvec} \in \bmf{\alg{X}}$ and $\tilde{f}_t \md{t;\parvec} \in \bmf{\alg{X}}$, and all $\K \in \nset$,  as $\N \to \infty$, 
	\begin{equation}
		\sqrt{N} \left( \sum_{i=1}^{\N} \frac{\wgt{t}{i}}{\wgtsum{t}} \{ \tstat[i]{t} f_t(\epart{t}{i}) + \tilde{f}_t(\epart{t}{i}) \} - \post{t; \parvec} (\tstat{t;\parvec} \af{t} \testf[t] + \tilde{f}_t) \right) \convd{} \Gamma_{t;\parvec} \langle f_t, \tilde{f}_t \rangle(\af{t}) Z,
	\end{equation}
	where $Z$ is a standard normal distribution and 
	\begin{equation} \label{eq:variance:expression:lemma:gamma}
		\Gamma_{t;\parvec}^2 \langle f_t, \tilde{f}_t \rangle(\af{t}) \eqdef 
		 \tilde{\Gamma}_{t;\parvec}^2 \langle f_t, \tilde{f}_t \rangle(\af{t}) 
		 + \sum_{s = 0}^{t-1} \sum_{\ell = 0}^{s} \K^{\ell - (s+1)} \gamma_{s, \ell, t; \parvec} \langle f_t \rangle(\af{t}), 
	\end{equation}
	with
	$$
	\tilde{\Gamma}_{t;\parvec}^2 \langle f_t, \tilde{f}_t \rangle(\af{t}) \eqdef \sum_{s=0}^{t-1} \frac{\pred{s+1;\parvec}\BFcent[2]{s+1}{t;\parvec}\{\md{t;\parvec} (\af{t} f_t + \tilde{f}_t)\}}{(\pred{s+1;\parvec} \lk{s+1;\parvec} \cdots \lk{t-1;\parvec} \md{t;\parvec})^2}
	$$
	and
	$$
	\gamma_{s, \ell, t; \parvec} \langle f_t \rangle(\af{t}) \eqdef \frac{\pred{\ell+1;\parvec} \{ \bk{\post{\ell;\parvec}}(\tstat{\ell;\parvec} \af{\ell} + \addf{\ell} - \tstat{\ell+1;\parvec} \af{\ell + 1})^2 \lk{\ell + 1;\parvec} \cdots \lk{s;\parvec}( \lk{s+1;\parvec} \cdots \lk{t-1;\parvec} \md{t;\parvec} f_t)^2 \} }{(\pred{\ell+1;\parvec} \lk{\ell + 1;\parvec} \cdots \lk{s;\parvec} \1{\set{X}})(\pred{s+1;\parvec} \lk{s+1;\parvec} \cdots \lk{t-1;\parvec} \md{t;\parvec})^2}. 
	$$
\end{lemma}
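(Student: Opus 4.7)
The statement asks for a time-uniform bound on the asymptotic variance $\sigma_{t;\parvec}^2(\testf[t])$ defined in \eqref{eq:asvar:short}. In view of that decomposition, the plan is to bound the two contributions separately: (i) the Rao--Blackwellised variance $\tilde{\sigma}_{t;\parvec}^2(\testf[t])$ -- which is exactly the asymptotic variance established for the algorithm of \cite{delmoral:doucet:singh:2015} -- and (ii) the additional PaRISian term $\sum_{s=0}^{t-1} \sum_{\ell=0}^{s} \K^{\ell-(s+1)} \varsigma_{s,\ell,t;\parvec}(\testf[t])$. The bound \eqref{FFBSm:var:bound} is obtained directly from \cite[Theorem~5.1]{delmoral:doucet:singh:2015}, whose assumptions are implied by \autoref{ass:mixing} together with \autoref{ass:bdd}; quoting this result disposes of the first term with a constant $c$ that does not depend on $t$ or $\parvec$. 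The rest of the proof is devoted to establishing \eqref{eq:bound:asvar}, i.e., to bounding the PaRISian correction.

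\textbf{Key ingredients under strong mixing.} Under \autoref{ass:mixing}, the normalised retro-prospective kernels satisfy the standard Dobrushin-type contraction properties used throughout the SMC literature. More precisely, I would establish (or invoke from \cite{delmoral:doucet:singh:2015,olsson:westerborn:2014b}) the following three uniform estimates. First, the ratio bounds
\[
\frac{\pred{s+1;\parvec} \lk{s+1;\parvec} \cdots \lk{t-1;\parvec} \1{\set{X}}}{\pred{\ell+1;\parvec} \lk{\ell+1;\parvec} \cdots \lk{s;\parvec} \1{\set{X}}}
\]
and their reciprocals are bounded above and below by positive constants depending only on $\rho$, $\mdlow$, $\mdup$, so the denominators in $\varsigma_{s,\ell,t;\parvec}$ can be controlled uniformly. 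Second, forward forgetting of the Feynman--Kac semigroup yields, for every centred function $\bar{f}_t = \testf[t] - \pred{t;\parvec}\testf[t]$,
\[
\supn{\lk{s+1;\parvec} \cdots \lk{t-1;\parvec} \bar{f}_t} \leq C \, \mdup^{\,t-s-1} \mr^{\,t-s-1} \supn{\testf[t]},
\]
which, after normalisation by $\pred{s+1;\parvec} \lk{s+1;\parvec} \cdots \lk{t-1;\parvec} \1{\set{X}}$, gives a factor of order $\mr^{2(t-s-1)} \supn{\testf[t]}^2$ in $\varsigma_{s,\ell,t;\parvec}(\testf[t])$. Third, and most delicate, I would bound the backward oscillation factor
\[
\bk{\post{\ell;\parvec}}\bigl(\tstat{\ell;\parvec}\af{\ell;\parvec} + \addf{\ell;\parvec} - \tstat{\ell+1;\parvec}\af{\ell+1;\parvec}\bigr)^2
\]
uniformly in $\ell$. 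Writing this expression through the backward recursion \eqref{eq:tstat:update} shows that it is a conditional variance of $\tstat{\ell;\parvec}\af{\ell;\parvec}(x_\ell) + \addf{\ell;\parvec}(x_\ell,x_{\ell+1})$ under $\bk{\post{\ell;\parvec}}(x_{\ell+1},\cdot)$, hence dominated by the squared oscillation of $\tstat{\ell;\parvec}\af{\ell;\parvec} + \addf{\ell;\parvec}$ on the support of the backward kernel. Under \autoref{ass:bdd} the $\addf{\ell;\parvec}$ contribution is bounded by $\hbd$, while under \autoref{ass:mixing} backward forgetting implies $\oscn{\tstat{\ell;\parvec}\af{\ell;\parvec}} \leq K \hbd$ for some constant $K$ depending only on $\rho$ -- this is the standard uniform oscillation bound for additive backward statistics, already used in \cite{delmoral:doucet:singh:2015} and \cite{olsson:westerborn:2014b}.

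\textbf{Assembling the bound.} Combining the three estimates yields, for constants $C_1,C_2 > 0$ depending only on $(\rho,\mdlow,\mdup,\hbd)$,
\[
\varsigma_{s,\ell,t;\parvec}(\testf[t]) \leq C_1 \, \mr^{2(t-s-1)} \, C_2 \supn{\testf[t]}^2.
\]
Plugging this into the double sum and using $\K^{\ell-(s+1)} = \K^{-(s-\ell)-1}$ I would then compute
\[
\sum_{s=0}^{t-1}\sum_{\ell=0}^{s} \K^{\ell-(s+1)} \varsigma_{s,\ell,t;\parvec}(\testf[t]) \leq C_1 C_2 \supn{\testf[t]}^2 \sum_{s=0}^{t-1} \mr^{2(t-s-1)} \sum_{k=0}^{s} \K^{-(k+1)},
\]
where $k = s-\ell$. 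The inner geometric series is bounded by $\K^{-1}(1-\K^{-1})^{-1} \leq (\K-1)^{-1}$ (here the assumption $\K\geq 2$ enters), and the outer geometric series in $\mr^2<1$ is bounded by $(1-\mr^2)^{-1}$. This produces exactly the $(\K-1)^{-1}(1-\mr^2)^{-1}$ dependence advertised in \eqref{eq:bound:asvar}.

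\textbf{Main obstacle.} The routine step is the geometric summation; the bookkeeping of ratios of normalising constants is tedious but standard. The hard part will be the uniform backward-oscillation bound $\oscn{\tstat{\ell;\parvec}\af{\ell;\parvec}} \leq K \hbd$, because $\af{\ell;\parvec}$ is a sum of $\ell$ terms and naive bounding would give an $\ell$-growing factor. The fix is to exploit backward forgetting of the inhomogeneous kernel $\tstat{\ell;\parvec}$ under \autoref{ass:mixing}, telescoping over time steps so that the summand at time $s$ contributes only $\mathcal{O}(\mr^{\ell-s})$ to the oscillation -- this is the technical heart of the argument and essentially the same device that underlies the time-uniform bound in \cite{delmoral:doucet:singh:2015}. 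I would therefore reuse (or re-derive on the nose) that lemma, after which everything else reduces to the summation described above.
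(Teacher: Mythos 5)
You have proved the wrong statement. The result in question is \autoref{lemma:lemma:CLT}, a central limit theorem: it asserts the weak convergence, as $\N \to \infty$, of the normalised error of the self-normalised PaRIS estimator $\sum_{i=1}^{\N} \wgt{t}{i} \wgtsum{t}^{-1} \{ \tstat[i]{t} f_t(\epart{t}{i}) + \tilde{f}_t(\epart{t}{i}) \}$ towards a centred Gaussian with variance $\Gamma_{t;\parvec}^2 \langle f_t, \tilde{f}_t \rangle(\af{t})$. Your proposal instead sketches \autoref{thm:bdd:var}, the time-uniform bound on the asymptotic variance of \autoref{thm:CLT}: every step of your argument --- the appeal to the bound of \cite{delmoral:doucet:singh:2015} for the Rao--Blackwellised term, the forward-forgetting estimate on $\lk{s+1;\parvec}\cdots\lk{t-1;\parvec}(\testf[t]-\pred{t;\parvec}\testf[t])$, the uniform oscillation bound on $\tstat{\ell;\parvec}\af{\ell;\parvec}$, and the final geometric summation producing the factor $(\K-1)^{-1}(1-\mr^2)^{-1}$ --- concerns the \emph{size} of a variance expression, not the existence of a Gaussian limit or the identification of its variance as \eqref{eq:variance:expression:lemma:gamma}. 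Nothing in the proposal establishes the claimed convergence in distribution, and the quantities you manipulate are those of \autoref{thm:CLT} and \autoref{thm:bdd:var} (with $\1{\set{X}}$ and centred $\testf[t]$) rather than the $\tilde{\Gamma}^2_{t;\parvec}$ and $\gamma_{s,\ell,t;\parvec}$ of the lemma, which carry an extra factor $\md{t;\parvec}$ from the self-normalisation by $\wgtsum{t}$.

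A proof of \autoref{lemma:lemma:CLT} requires an induction over $t$ on the particle system: conditioning on the history of the particle cloud, applying a CLT for triangular arrays of conditionally i.i.d.\ variables to each of the selection, mutation and PaRISian backward-sampling steps, and tracking how each step adds a term to the accumulated variance --- in particular, the $\K$ conditionally independent backward draws in \eqref{eq:update:paris} are what generate the double sum $\sum_{s=0}^{t-1}\sum_{\ell=0}^{s}\K^{\ell-(s+1)}\gamma_{s,\ell,t;\parvec}\langle f_t\rangle(\af{t})$. The paper does not redo this work: its entire proof of the lemma is the remark that the result is obtained by inspection of the proof of Theorem~3 of \cite{olsson:westerborn:2014b}, the only adjustments being the additional test function $\tilde{f}_t$ and the reversed convention for $\lk{t;\parvec}$. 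Your text, essentially correct as far as it goes, belongs in the proof of \autoref{thm:bdd:var}, where it closely parallels the paper's argument; as a proof of the present lemma it is a non sequitur.
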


\subsection{Proof of~\autoref{thm:bdd:var}}
As noted above, the first term of the asymptotic variance $\asvard[2]{t}{\testf[t]}$ coincides with the asymptotic variance $\tilde{\sigma}_{t; \theta}^2(\testf[t])$ obtained by~\citet[Theorem 3.2]{delmoral:doucet:singh:2015}. {The same work provides a constant $c \in \rset_+$ such that $\tilde{\sigma}_{t; \theta}^2(\testf[t]) \leq c \supn{\testf[t]}$, and} we may hence focus on bounding second term of the asymptotic variance.  

For this purpose, note that for all $s \leq t - 1$ and $x_{s + 1} \in \set{X}$,  
	\begin{multline}
		\lk{s+1;\parvec} \cdots \lk{t-1;\parvec}(\testf[t] - \pred{t;\parvec}\testf[t])(x_{s+1}) = \md{s+1;\parvec}(x_{s+1}) \hk_{\parvec} \lk{s+2;\parvec} \cdots \lk{t-2;\parvec}\{\md{t-1;\parvec}(\hk_{\parvec} \testf[t] - \post{t-1;\parvec}\hk_{\parvec} \testf[t])\}(x_{s+1}). \label{eq:rewrite}
	\end{multline}
By applying the forgetting of the filter, or, more particularly, \citet[Lemma 10]{douc:garivier:moulines:olsson:2010}, to \eqref{eq:rewrite} we obtain 
	\begin{align}
		\supn{\lk{s+1;\parvec} \cdots \lk{t-1;\parvec}(\testf[t] - \pred{t;\parvec}\testf[t])} \leq 2 \mdup \supn{\hk_{\parvec} \lk{s+2;\parvec} \cdots \lk{t-1;\parvec}\1{\set{X}}}  \| \testf[t] \|_\infty^2 \mr^{t - s - 2}.  
	\end{align}
	{Note that in the previous bound, the exponential contraction follows from the fact that the objective function $f_t$ is centered around its predicted mean. The latter is a consequence of the fact that the tangent filter is, as a covariance, centered itself (recall the identities \eqref{eq:tangent:identity:first:step} and \eqref{eq:tangent:identity} and the decomposition \eqref{eq:key:decomposition}).}
	In addition, from the proof of~\citet[Theorem 8]{olsson:westerborn:2014b} we extract, {using~\autoref{ass:bdd},}
	$$
		\supn{\tstat{\ell;\parvec} \af{\ell;\parvec} + \addf{\ell;\parvec} - \tstat{\ell+1;\parvec}\af{\ell + 1; \parvec}} \leq \hbd \left( 4 \mdup \frac{\mr^{2}}{1 - \mr} + 1 \right), 
	$$
	and under~\autoref{ass:mixing}, for all $x \in \set{X}$,
	$$
	\rho^{-1} \mdlow \refM \lk{s+2;\parvec} \cdots \lk{t-1;\parvec}\1{\set{X}} \leq \lk{s+1;\parvec} \cdots \lk{t-1;\parvec} \1{\set{X}}(x) \leq \rho \mdup \refM \lk{s+2;\parvec} \cdots \lk{t-1;\parvec}\1{\set{X}}.
	$$
Combining the previous bounds gives 
	\begin{multline}
		\asvardPaRIS{s}{\ell}{t}{\testf[t]} = \frac{\pred{\ell+1;\parvec} \{ \bk{\post{\ell;\parvec}}(\tstat{\ell;\parvec} \af{\ell;\parvec} + \addf{\ell;\parvec} - \tstat{\ell+1;\parvec} \af{\ell + 1;\parvec})^2 \lk{\ell + 1;\parvec} \cdots \lk{s;\parvec}( \lk{s+1;\parvec} \cdots \lk{t-1;\parvec} \{\testf[t] - \pred{t;\parvec} \testf[t]\})^2\} }{(\pred{\ell+1;\parvec} \lk{\ell + 1;\parvec} \cdots \lk{s;\parvec} \1{\set{X}})(\pred{s+1;\parvec} \lk{s+1;\parvec} \cdots \lk{t-1;\parvec} \1{\set{X}})^2} \\
		\leq \tilde{c} \|Êf_t \|_\infty^2 \mr^{2(t - s - 2)},
	\end{multline}
where 
$$
\tilde{c} \eqdef 4 \hbd^2 \left( 4 \mdup \frac{\mr^2}{1 - \mr} + 1\right)^2 \left( \frac{\mdup}{\mdlow(1 - \mr)} \right)^3. 
$$
Finally, summing up yields 
	\begin{multline}
		\sum_{s=0}^{t-1}\sum_{\ell = 0}^s \asvardPaRIS{s}{\ell}{t}{\testf[t]}  \leq \tilde{c} \|Êf_t \|_\infty^2 \sum_{s=0}^{t-1}\sum_{\ell = 0}^s \K^{\ell - s- 1}\mr^{2(t-s-2)} = \tilde{c} \|Êf_t \|_\infty^2 \frac{1}{\K - 1} \sum_{s=0}^{t-1} \mr^{2(t-s-2)}(1 - \K^{-s-1}) \\
		\leq \tilde{c} \|Êf_t \|_\infty^2 \frac{1}{\K - 1} \sum_{s=0}^{t-1} \mr^{2(t-s-2)} \leq \tilde{c} \|Êf_t \|_\infty^2 \frac{1}{(\K - 1)(1 - \mr^2)},
	\end{multline}
	which completes the proof. \qed


\section{Kernels}
\label{sec:kernel}

Given two measurable spaces $\measSpace{X}$ and $\measSpace{Y}$, an unnormalised transition kernel $\kernel{K}$ between these spaces induces two integral operators, one acting on functions and the other on measures. Specifically, we define the function
\begin{align}
	\kernel{K} h : \set{X} \ni x \mapsto \int h(x,y) \, \kernel{K}(x, \rmd y) \quad (h \in \bmf{\alg{X} \varotimes \alg{Y}})
\end{align}
and the measure
\begin{align}
	\nu \kernel{K} : \alg{Y} \ni \set{A} \mapsto \int \kernel{K}(x, \set{A}) \, \nu(\rmd x) \quad (\nu \in \meas{\alg{X}})
\end{align}
whenever these quantities are well-defined. Moreover, let $\kernel{L}$ be another unnormalised transition kernel from $\measSpace{Y}$ to the measurable space $\measSpace{Z}$; then two different \emph{products} of $\kernel{K}$ and $\kernel{L}$ can be defined, namely
\begin{align}
	\kernel{K} \kernel{L} : \set{X} \times \alg{Z} \ni (x, \set{A}) \mapsto \int \kernel{K}(x, \rmd y) \, \kernel{L}(y, \set{A})
\end{align}
and
\begin{align}
	\kernel{K} \varotimes \kernel{L} : \set{X} \times (\alg{Y} \varotimes \alg{Z}) \ni (x, \set{A}) \mapsto \int \1{\set{A}}(y,z) \, \kernel{K}(x, \rmd y) \, \kernel{L}(y, \rmd z) 
\end{align}
whenever these are well-defined. These products form new transition kernels from $\measSpace{X}$ to $\measSpace{Z}$ and from $\measSpace{X}$ to $(\set{Y} \times \set{Z}, \alg{Y} \varotimes \alg{Z})$, respectively. Also the $\varotimes$-product
 of a kernel $\kernel{K}$ and a measure $\nu \in \meas{\alg{X}}$ is defined as the new measure
\begin{align}
	\nu \varotimes \kernel{K} : \alg{X} \varotimes \alg{Y} \ni \set{A} \mapsto \int \1{\set{A}}(x,y) \, \kernel{K}(x, \rmd y) \, \nu(\rmd x) \eqsp.
\end{align}
Finally, for any kernel $\kernel{K}$ and any bounded measurable function $h$ we write $\kernel{K}^2 h \eqdef (\kernel{K} h)^2$ and $\kernel{K} h^2 \eqdef \kernel{K}(h^2)$. Similar notation will be used for measures.

\end{document}